\documentclass[english]{elsarticle}

\usepackage{lineno,hyperref}
\usepackage{array,booktabs}
\usepackage{epsfig}
\usepackage[english]{babel}
\usepackage{mathrsfs}
\usepackage{enumerate}
\usepackage{graphicx}
\usepackage{inputenc}
\usepackage{amsfonts,amsmath,amsxtra,amsthm,amssymb,latexsym}
\usepackage{caption}
\usepackage{color,colordvi}
\usepackage{enumerate}
\usepackage{rotating}


\newtheorem{theorem}{Theorem}[section]

\newtheorem{example}[theorem]{Example}

\newcommand{\N}{\mathbb{N}}
\newcommand{\R}{\mathbb{R}}

\newcommand{\e}{\mathrm{e}}
\newcommand{\CC}{\mathcal{C}}
\newcommand{\OO}{\mathcal{O}}









\bibliographystyle{elsarticle-num}

\begin{document}

\begin{frontmatter}

\title{Spectrum of periodic chain graphs with time-reversal non-invariant vertex coupling}

\author[1]{Marzieh Baradaran}
\ead{marzie.baradaran@yahoo.com}

\author[2,3]{Pavel Exner\corref{cor1}}
\ead{exner@ujf.cas.cz}
\cortext[cor1]{Corresponding author}

\author[3]{Milo\v{s} Tater}
\ead{tater@ujf.cas.cz}

\address[1]{Department of Mathematics, Faculty of Nuclear Sciences and Physical Engineering, Czech Technical University, B\v rehov\'a 7, 11519 Prague, Czechia}
\address[2]{Doppler Institute for Mathematical Physics and Applied Mathematics, Czech Technical University, B\v rehov\'a 7, 11519 Prague,
Czechia}
\address[3]{Department of Theoretical Physics, Nuclear Physics Institute, Czech Academy of Sciences, 25068 \v{R}e\v{z} near Prague, Czechia}

\begin{abstract}
We investigate spectral properties of quantum graphs in the form of a periodic chain of rings with a connecting link between each adjacent pair, assuming that wave functions at the vertices are matched through conditions manifestly non-invariant with respect to time reversal. We discuss, in particular, the high-energy behavior of such systems and the limiting situations when one of the edges in the elementary cell of such a graph shrinks to zero. The spectrum depends on the topology and geometry of the graph. The probability that an energy belongs to the spectrum takes three different values reflecting the vertex parities and mirror symmetry, and the band patterns are influenced by commensurability of graph edge lengths.
\end{abstract}

\begin{keyword}
\texttt{Quantum graph, periodic structure, time reversal non-invariance, spectral gaps}
\end{keyword}

\end{frontmatter}




\section{Introduction}
\setcounter{equation}{0}

One can say about quantum graphs that they belong to the class of ideas coming before their time. Proposed first more than eighty years ago \cite{Pa36} and briefly investigated \cite{RSch53} in line with the original proposal they were then forgotten for several decades and attracted attention only at the end of the 1980s in connection with the progress in semiconductor physics. In the years that followed they appeared to be a source of many deep questions going beyond this `second motivation'; we refer to the book \cite{BK13} for a thorough introduction to this field and a rich bibliography.

Among numerous applications of quantum graphs we single out a recent paper aiming at modeling the \emph{anomalous Hall effect} \cite{SK15} which served as an indirect motivation of the present work. Let us recall that in order to make the Hamiltonian of a quantum graph, which acts as a one-dimensional Schr\"odinger operator at each edge, a self-adjoint operator, one has to match the wave functions properly at the graph vertices. There is a number of way how to do that: the most general condition in a vertex $v$ connecting $n$ edges can be written in the form
\begin{equation}\label{genbc}
 (U-I)\psi(v)+i\ell(U+I)\psi'(v)=0,
\end{equation}
where $\psi(v)$ and $\psi'(v)$ are vectors of boundary values of the functions and their (outward) derivatives, $\ell>0$ is a parameter fixing the length scale, and $U$ is an $n\times n$ unitary matrix. This multitude raises the question about the meaning of the parameters since in general different matrices $U$ define a different physics. If the continuity of wave functions at the vertex is required, the number of parameters is significantly reduced. In this case we end up with the so-called $\delta$ coupling \cite{Ex96} which refers to $U= {2\over n+i\alpha}\mathcal{J}-I$ in \eqref{genbc} with the single parameter $\alpha\in\R$, where $\mathcal{J}$ is the matrix whose all entries are equal to one; the particular case with $\alpha=0$ is usually labeled as Kirchhoff. These couplings can be interpreted in the limiting sense starting from the dynamics in a family of thin tubes with Neumann boundary built around the graph `skeleton' that shrink transversally to zero width: such a limit leads to the Kirchhof coupling \cite{Po11, RSch53}, and the $\delta$ coupling is obtained by adding a properly scaled potential in the vertex region.

The $\delta$ coupling was also used in the above mentioned paper \cite{SK15}. However, in order to produce a Hall-type behavior, that is, a voltage perpendicular to the current passing though the sample without the presence of a magnetic field, the authors had to impose a preferential direction on the graph edges which is something hard to justify from the first principles. On the other hand, the family of condition \eqref{genbc} includes those which violate the time reversal invariance. A simple example proposed by two of us in \cite{ET18} corresponds to the matrix
\begin{equation}\label{U}
\quad U= {\scriptsize \left( \begin{array}{ccccccc}
0 & 1 & 0 & 0 & \cdots & 0 & 0 \\ 0 & 0 & 1 & 0 & \cdots & 0 & 0 \\ 0 & 0 & 0 & 1 & \cdots & 0 & 0 \\ \cdots & \cdots & \cdots & \cdots & \cdots & \cdots & \cdots \\ 0 & 0 & 0 & 0 & \cdots & 0 & 1 \\ 1 & 0 & 0 & 0 & \cdots & 0 & 0
\end{array} \right)}\,;
\end{equation}
it is chosen so that at a fixed value of the momentum, $k=\ell^{-1}$, the motion in the vertex is cyclic, cf.~\eqref{onshell} below. The lack of time reversal invariance is obvious having in mind that this operation is represented by complex conjugation.

Instead of following the mentioned motivation, however, our main focus will be another striking property of this coupling. It was noticed in \cite{ET18} that transport properties of such vertices depend on the graph topology, specifically on the vertex parity. This was first manifested on band spectra of infinite lattice graphs: comparing square and hexagonal lattice with the indicated coupling one finds that their spectra are at high energies dominated by bands and gaps, respectively \cite{ET18}. The same mechanism determines the high-energy behavior of finite periodic graphs \cite{EL19} and strip `waveguides' \cite{EL20}.

In another recent paper \cite{BET20} we investigated the spectral behavior of a one-dimensional periodic graph in the form of a ring chain of the form sketched in Fig.~\ref{Figure01}. We worked in the symmetric situation, $\ell_2=\ell_3=\pi$, and the length scale fixed by $\ell=1$, and found, in particular, that the spectrum is for any $\ell_1$ dominated by gaps the size of which grows with $k$, the square root of energy. The widths of spectral bands, on the other hand, are bounded being either asymptotically constant or decreasing as $\mathcal{O}(k^{-1})$ depending on whether the appropriate spectral condition has a double or a single root, respectively. What was important, however, was that all the gaps (in the positive part of the spectrum) closed as $\ell_1\to 0$, illustrating that this vanishing edge limit, which follows from the general result of \cite{BLS19}, can be rather non-uniform.

The danger of such examples is that they may be too particular giving results non-generic in a broader context. This motivates us examine a wider class of such chain graphs. Except for the obvious change of the spectrum associated with an overall scaling, we are going to work now with three parameters, $\ell_1,\,\ell_3$, and $\ell$, instead of one considered in \cite{BET20}. This analysis will provide deeper insights into the structure of the band spectrum and show, in particular, that its high-energy behavior can be far more complex than the two mentioned asymptotic types might indicate, especially in connection with the possible incommensurability of edge lengths involved. This concerns both the gaps patterns, in particular, the distribution of their widths, as well as the `conductivity'. It appear that gapless positive spectrum in case of direct coupling of the rings is closely related to the mirror symmetry of the graph; once it is violated the probability that an energy value is found in the spectrum for a graph with vertices of degree four is reduced to one half. This shows that an even degree of vertices of a periodic graph does not guarantee that its positive spectrum is dominated by bands. On the other hand, the same quantity is zero when the vertices are of degree three so the nonuniform character of the limit mentioned above is preserved.

Let us mention briefly the contents of the paper. In the next section we discuss the generic case when all the edge lengths are nonzero. We derive the spectral condition and analyze the band spectrum, in particular, we describe its behavior in the high-energy regime. We also find conditions under which some of the gaps close and show that the negative spectrum consists of at most two bands. The following two sections are devoted to the situations when one of the three edge length parameters is zero, and as a consequence, the chain vertices have an even parity.

\section{The general case with vertices of degree three}
\setcounter{equation}{0}

We begin the discussion with infinite chain graph of the form sketched in Fig.~\ref{Figure01} assuming that all the edge lengths involved are nonzero, $\ell_j>0$ for $j=1,2,3$, and as a consequence, all the graph vertices are of degree three. The operator to investigate is the Laplacian on the graph acting as $\psi_j\mapsto -\psi''_j$ on the $j$th edge. Its domain consists of functions which are locally $H^2$and to make the operator self-adjoint, we have to match them properly at the vertices.
\begin{figure}[h]
\centering
\includegraphics[scale=.3]{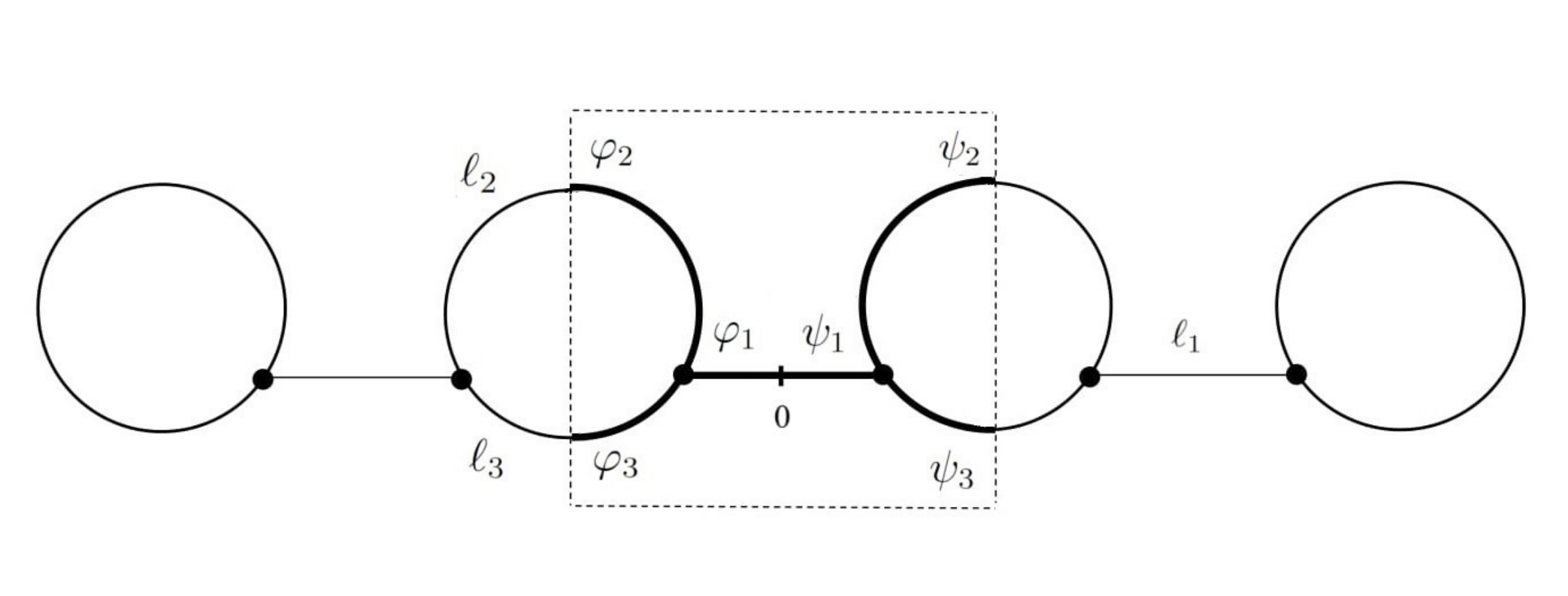}
\caption{ An elementary cell of the ring chain graph}
\label{Figure01}
\end{figure}

In the introduction we indicated our interest in a particular class of vertex couplings violating the time-reversal invariance which in the components read
\begin{equation}\label{coupling}
(\psi_{j+1}-\psi_{j})+i\ell(\psi_{j+1}^{\prime}+\psi_{j}^{\prime})=0, \quad\text{(cyclically)}
\end{equation}
where we use the symbols $\psi_j,\,\psi'_j$ with an abuse of notation for the boundary values of the function $\psi_j$ on the $j$th edge and its (outward) derivative. The condition contains another length-type parameter, $\ell$, which is a fixed positive number; its inverse is the momentum value at which the coupling exhibits the `maximum rotation' in the sense that the on-shell S matrix
\begin{equation}\label{onshell}
S(k) = \frac{k\ell-1 +(k\ell+1)U}{k\ell+1 +(k\ell-1)U}
\end{equation}
reduces for $k=\ell^{-1}$ to the cyclic matrix \eqref{U}, in the present case of the $3\times 3$ size. Since the spectral properties change in an obvious way when all the length-type quantities are scaled simultaneously, we may without loss of generality fix the scale requiring, for instance, that the upper and lower ring arcs lengths, $\ell_{2}$ and $\ell_{3}$, satisfy $\ell_{2}+\ell_{3}=2\pi$.

The chain graph we consider is periodic, hence the spectral analysis can be performed using the Floquet method \cite[Chap.~4]{BK13} writing the corresponding Hamiltonian as
\begin{equation}\label{Floquet1}
H = \int_{-\pi}^\pi H(\theta)\, \mathrm{d}\theta
\end{equation}
where the fiber $H(\theta)$ in the direct integral decomposition \eqref{Floquet1} acts on $L^2(\CC)$, where $\CC$ is the period cell and $\CC^*=[-\pi,\pi)$ is the dual cell, or Brillouin zone. Each of the operators $H(\theta)$ has a purely discrete spectrum and the spectrum of $H_\ell$ is the union $\bigcup_{\theta \in\CC^*} \sigma(H(\theta))$. It is well known that the unique continuation property does not hold in general in quantum graphs \cite[Sec.~3.4]{BK13}, hence $\sigma(H)$ may contain, in addition to the absolutely continuous part, infinitely degenerate eigenvalues, or `flat bands' as physicists usually call them; we will see that this indeed may happen here.

The elementary cell of our graph, indicated in Fig.~\ref{Figure01}, contains two vertices. Choosing the coordinates to increase from the left to right, we use for the wave function components the following Ansatz
\begin{align}\label{ansatz}
& \psi_{j}(x)=a_{j}^{+}\e^{ikx}+a_{j}^{-}\e^{-ikx},\quad x\in[0,\textstyle{\frac12}\ell_{j}],\nonumber \\
& \varphi_{j}(x)=b_{j}^{+}\e^{ikx}+b_{j}^{-}\e^{-ikx},\quad x\in[-\textstyle{\frac12}\ell_{j},0],
\end{align}
with $j=1,2,3$. The point $x=0$ is in the middle of the connecting edge, hence the function must be continuous there together with its derivative. Together with the Floquet conditions at the `free ends' of the cell we have
\begin{align}\label{Floquet}
&\psi _2\left(\textstyle{\frac12}{\ell_2}\right)-\e^{i \theta } \varphi _2\left(-\textstyle{\frac12}{\ell_2}\right)=0,\qquad\psi _2'\left(\textstyle{\frac12}{\ell_2}\right)-\e^{i \theta } \varphi _2'\left(-\textstyle{\frac12}{\ell_2}\right)=0,\nonumber \\
&\psi _3\left(\textstyle{\frac12}{\ell_3}\right)-\e^{i \theta } \varphi _3\left(-\textstyle{\frac12}{\ell_3}\right)=0,\qquad\psi _3'\left(\textstyle{\frac12}{\ell_3}\right)-\e^{i \theta } \varphi _3'\left(-\textstyle{\frac12}{\ell_3}\right)=0, \\
 &\psi _1(0)-\varphi _1(0)=0,\quad\quad\quad\quad\quad\quad\quad\psi _1'(0)-\varphi _1'(0)=0.\nonumber
\end{align}
This has to be complemented by the matching conditions (\ref{coupling}) at the vertices. Since the derivatives are taken in the outward direction, they read
\begin{align}\label{6eqs}
& \psi _3(0)-\psi _1\left(\textstyle{\frac12}{\ell_{1}}\right)+i \ell  \left(\psi _3'(0)-\psi _1'\left(\textstyle{\frac12}{\ell_{1}}\right)\right)=0,\nonumber \\
&\psi _2(0)-\psi _3(0)+i \ell  \left(\psi _2'(0)+\psi _3'(0)\right)=0,\nonumber \\
&\psi _1\left(\textstyle{\frac12}{\ell_{1}}\right)-\psi _2(0)+i \ell  \left(\psi _2'(0)-\psi _1'\left(\textstyle{\frac12}{\ell_{1}}\right)\right)=0,\\
&\varphi _2(0)-\varphi _1\left(-\textstyle{\frac12}{\ell_{1}}\right)+i \ell  \left(\varphi _1'\left(-\textstyle{\frac12}{\ell_{1}}\right)-\varphi _2'(0)\right)=0,\nonumber \\
&\varphi _3(0)-\varphi _2(0)+i \ell  \left(-\varphi _2'(0)-\varphi _3'(0)\right)=0,\nonumber \\
&\varphi _1\left(-\textstyle{\frac12}{\ell_{1}}\right)-\varphi _3(0)+i \ell  \left(\varphi _1'\left(-\textstyle{\frac12}{\ell_{1}}\right)-\varphi _3'(0)\right)=0.\nonumber
\end{align}
Inserting now from (\ref{ansatz}) into (\ref{6eqs}) and taking into account (\ref{Floquet}), we get a system of six linear equations for the coefficients $a_{j}^{\pm}$ and $b_{j}^{\pm}$, $j=1,2,3$. To be solvable, its determinant has to vanish; this requirement yields the spectral condition,
\begin{align} \label{Pos,SC,general,ell_1,2,3}
&2 \cos \theta  \big(k^2 \ell^2+1\big) \big(\sin k\ell_2 +\sin k\ell_3\big)+4 k \ell  \sin \theta \big(\cos k\ell_3 -\cos k\ell_2\big)\nonumber\\
&+2 \big(k^2 \ell^2+1\big) \big(\sin k\ell_1 -\sin k(\ell_1+\ell_3) \cos k\ell_2 -\sin k\ell_2 \cos k\ell_1 \cos k\ell_3\big)\nonumber\\
&+\big(k^4 \ell^4+3\big) \sin k\ell_1 \sin k\ell_2 \sin k\ell_3=0,
\end{align}
which can be simplified to the form
\begin{align}\label{Pos,SC,ell_1,3}
&\sin\pi k \big(16 \cos \theta \big(k^2 \ell^2+1\big)  \cos k \left(\pi -\ell_3\right)+32 k \ell  \sin \theta \sin k \left(\pi -\ell_3\right)\big)\nonumber\\
&+\big(k^2 \ell^2-1\big)^2 \big(\sin k \left(2 \pi -\ell_1\right)+2 \sin k\ell_1 \cos 2 k \left(\pi -\ell_3\right)\big)\nonumber\\
&+8 \big(k^2 \ell^2+1\big) \sin k\ell_1 -\big(k^2 \ell^2+3\big)^2 \sin k \left(\ell_1+2 \pi \right)=0,
\end{align}
taking into account that $\ell_2=2\pi-\ell_3$ holds by assumption. We divide the discussion of its implications into several parts.

\subsection{Positive spectrum}

Let us first investigate the flat bands. The condition \eqref{Pos,SC,ell_1,3} can have a solution independent of $\theta$ only if the first term on its left-hand side vanishes identically. Consider thus $k=n\in \mathbb{N}$, then (\ref{Pos,SC,ell_1,3}) reduces to
$$ 
-4 \big(n^2 \ell^2-1\big)^2 \sin n\ell_1 \sin^2 n\ell_3=0,
$$ 
and from this condition we infer that
\begin{itemize}
\item For $\ell=\frac{1}{n}$ and $k=n\in \mathbb{N}$, the number $k^{2}$ belongs to the spectrum independently of $\ell_{1}$ and $\ell_{3}$, being always embedded in the continuous spectrum.
\item Assuming that at least one of $\ell_{i},\:i=1,3$, is a rational multiple of $\pi$, in other words $\ell_{i}=\frac{p}{q}\pi$ with coprime $p,q\in\mathbb{N}$, the number $k^{2}=q^{2}n^{2}$ with $n\in\mathbb{N}$ belongs to the spectrum  for all $p$, independently of the other parameters, being always embedded in the continuous spectrum if the rationality concerns $\ell_3$ (or equivalently, $\ell_2$).
\item In particular, for $\ell_{1}=m\pi$ with $m\in\mathbb{N}$, the number $k^2=n^2$ belongs to the spectrum for any $n\in\mathbb{N}$, independently of the other parameters. Unless $\ell_3$ (or $\ell_2$) is simultaneously a multiple of $\pi$, these eigenvalues may not be embedded in the continuous spectrum.
\item While this concerns the degree-four case discussed in the subsequent sections, we note here also that if $\ell_{i}=0$ holds for at least one of $i=1,3,$ the number $k^2=n^2$ belongs to the spectrum for any $n\in\mathbb{N}$ independently of the other parameters, being always embedded in the continuous spectrum in the case of $\ell_3=0$ (or $\ell_2=0$).
\end{itemize}
The claims concerning the embedding follow from the continuous spectrum analysis to which we now proceed.

Away of these flat bands, the spectrum is (absolutely) continuous having a band-and-gap structure. To show it, we rewrite the condition  (\ref{Pos,SC,ell_1,3}) in the form
\begin{equation}\label{eq,abc,gen}
a\cos\theta + b\sin\theta =c
\end{equation}
with
\begin{align}\label{abc,gen}
&  a=    16 \big(k^2 \ell ^2+1\big) \sin k\pi \;\cos  k \left(\pi -\ell _3\right)     ,\nonumber \\
&  b=   32 k \ell  \;\sin k\pi \;\sin  k \left(\pi -\ell _3\right)    ,  \\
&  c=    -\big(k^2 \ell ^2-1\big)^2 \big(\sin  k(2 \pi -\ell _1)+2 \sin  k \ell _1  \cos  2 k (\pi -\ell _3) \big)   \nonumber\\
& \quad\; -8 \big(k^2 \ell ^2+1\big) \sin  k \ell _1 +\big(k^2 \ell ^2+3\big)^2 \sin  k(\ell _1+2 \pi)     .\nonumber
\end{align}
For a positive $k\not\in\N$ we have $a^2+b^2\neq 0$ and introducing $\sin\vartheta =\frac{a}{\sqrt{a^2+b^2}}$ and $\cos\vartheta =\frac{b}{\sqrt{a^2+b^2}}$, we can rewrite condition (\ref{eq,abc,gen}) as
\begin{equation}\label{eq,sin,al,th}
\sin (\vartheta +\theta )=\frac{c}{\sqrt{a^2+b^2}}.
\end{equation}
Consequently, $k^2$ belongs to a spectral band or gap, respectively, if
\begin{equation}\label{band,abc,gen,pos}
 a^2 + b^2-c^2 \geq 0,
\end{equation}
or
\begin{equation}\label{gap,abc,gen,pos}
 a^2 + b^2-c^2 <0.
\end{equation}

The band-gap structure depends on the parameters of the model. While generically the gaps are open, some of them may close; this typically happens if the neighboring band edges exhibit a crossing as illustrated in Fig.~\ref{Figure02}; note that these points appear in sequences having the same energy.
\begin{figure}[h]
\centering
\includegraphics[scale=.8]{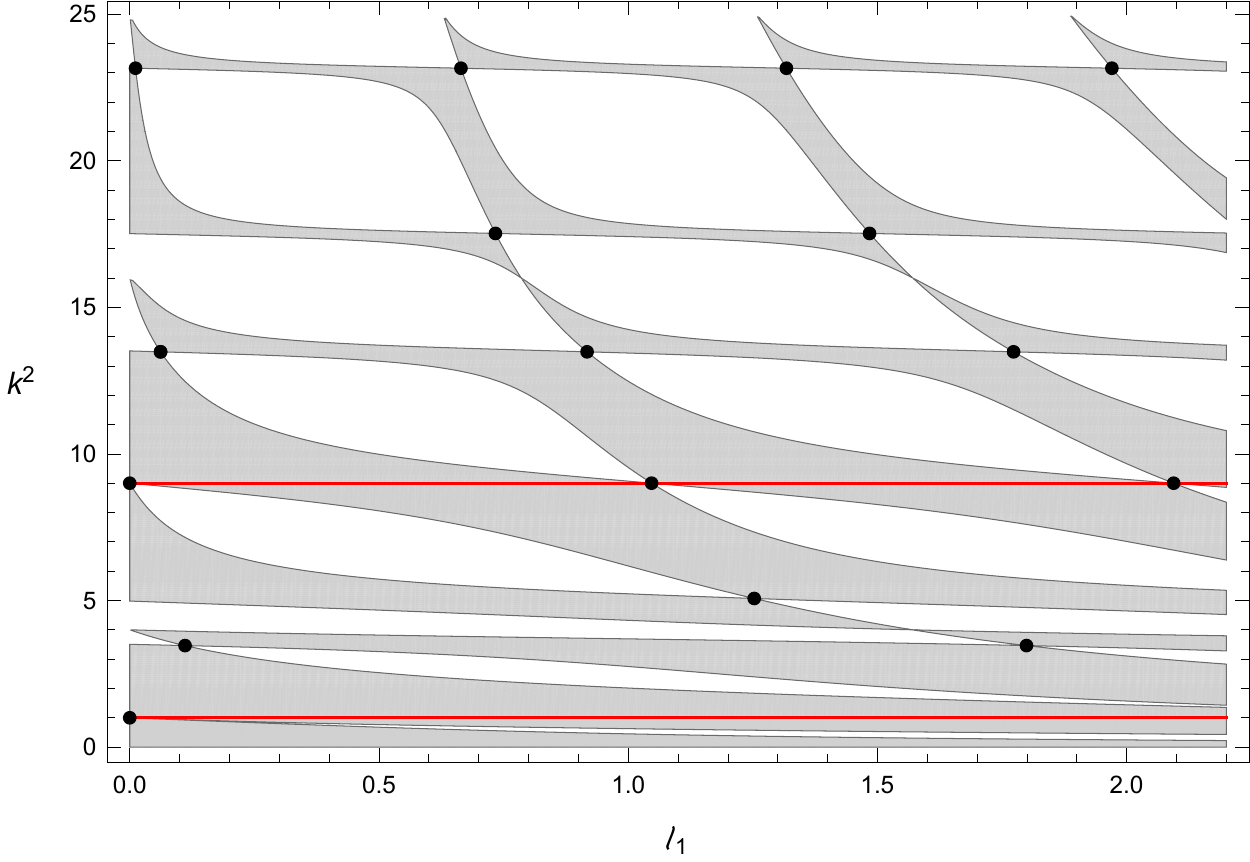}
\caption{The band edge crossings indicated by black dots for $\,\ell=1$, $\:\ell_3=\frac13\pi$ in dependence on $\ell_1>0$. The red lines correspond to flat bands indicated in the first two bullet points above, independent of $\ell_1$. On the other hand, the dots at $k^2=4$ and $16$ correspond to the flat bands of the second bullet point independent of $\ell_3$.}
\label{Figure02}
\end{figure}

We can identify such situations under commensurability requirements concerning relations of the lengths $\ell_3$ and $\ell_1$ and the ring perimeter. Let us focus on the former case, the latter can be dealt with in a similar way. If $\ell_3$ is a \emph{rational multiple} of $\pi$, \emph{i.e.} $\ell_{3}=\frac{m}{n}\pi$ with coprime $m,n\in\mathbb{N}$, the band edges cross at the points with the coordinates are $(\ell_1,k)=(\frac{j\ell_3}{im}, in),\;i,j\in\N\:$ (the case $\ell_3=\pi$, that is, $m=n=1$, is thus included). To see that we write the spectral condition as $\mathcal{F}(k,\ell_1,\theta)=0$, where $\mathcal{F}(k,\ell_1,\theta)$ is the left-hand side of (\ref{Pos,SC,ell_1,3}) for the particular values of $\ell_3$. Sufficient conditions to have such a crossings are
\begin{equation}\label{cross,con}
\frac{\partial \mathcal{F}(k,\ell_1,\theta)}{\partial \ell_1}=\frac{\partial \mathcal{F}(k,\ell_1,\theta)}{\partial \theta}=\frac{\partial \mathcal{F}(k,\ell_1,\theta)}{\partial k}=0\,;
\end{equation}
the first and last are obvious from Fig.~\ref{Figure02}, the middle one comes from the observation that the dispersion curves are smooth and the band edges correspond to their extrema. The derivatives can be easily calculated,
 \begin{align}
&\frac{\partial \mathcal{F}(k,\ell_1,\theta)}{\partial \ell_1}=-k \big(k^2 \ell ^2-1\big)^2 \left(\cos k \left(2 \pi -\ell _1\right)-2 \cos k\ell_1 \cos 2 k \left(\pi -\textstyle{\frac{\pi m}{n}}\right)\right) \nonumber \\
& \hspace{7em} +8 k \big(k^2 \ell ^2+1\big) \cos k\ell_1 -k \big(k^2 \ell ^2+3\big)^2 \cos k \left(\ell _1+2 \pi \right), \label{cross_ell_1}\\[.5em]
& \frac{\partial \mathcal{F}(k,\ell_1,\theta)}{\partial \theta}=32 k \ell\,  \cos \theta\,  \sin k \left(\pi -\textstyle{\frac{\pi m}{n}}\right)-16 \sin\theta  \big(k^2 \ell ^2+1\big) \cos k \left(\pi -\textstyle{\frac{\pi m}{n}}\right) \label{cross_theta}, \\[.5em]
& \frac{\partial \mathcal{F}(k,\ell_1,\theta)}{\partial k}= \frac{16}{n} \sin (\pi  k) \sin k \big(\pi -\textstyle{\frac{\pi m}{n}}\big) \big(\pi  \cos\theta\, \big(k^2 \ell ^2+1\big) (m-n)+2n\ell  \sin\theta \big)\nonumber \\
& \quad  +\big(k^2 \ell ^2-1\big)^2 \Big(2\ell_1 \cos k\ell_1\, \cos 2 k \big(\pi -\textstyle{\frac{\pi m}{n}}\big)-4 \big(\pi -\textstyle{\frac{\pi m}{n}}\big) \sin k\ell_1\, \sin 2 k \big(\pi -\textstyle{\frac{\pi m}{n}}\big)\Big)  \nonumber \\
& \quad   +4 k\ell^2 \big(k^2 \ell ^2-1\big) \Big(2 \sin k\ell_1\, \cos 2k \big(\pi -\textstyle{\frac{\pi m}{n}}\big)+\sin k(2\pi-\ell_1)\big)+16 k\ell^2 \sin k\ell_1  \nonumber \\
& \quad +\pi \cos\pi k \Big(16 \cos\theta\, \big(k^2 \ell ^2+1\big) \cos k \big(\pi -\textstyle{\frac{\pi m}{n}}\big)+32 k\ell \sin\theta\,  \sin k \big(\pi -\textstyle{\frac{\pi m}{n}}\big)\Big) \nonumber \\
& \quad    -4 k\ell^2 \big(k^2 \ell ^2+3\big) \sin k(\ell_1+2\pi)- (\ell_1+2\pi) \big(k^2 \ell ^2+3\big)^2 \cos k(\ell_1+2\pi) \nonumber \\
& \quad  +\frac{32}{n}\;k\ell  \sin\pi k\, \cos k\big(\pi -\textstyle{\frac{\pi m}{n}}\big) \big((n-m)\pi \sin\theta +n\ell \cos\theta \big) \nonumber \\
& \quad  +(2\pi -\ell_1) \big(k^2 \ell ^2-1\big)^2 \cos k(2\pi -\ell_1) +8\ell_1 \big(k^2 \ell ^2+1\big) \cos k\ell_1. \label{cross_k}
\end{align}
To solve the equations (\ref{cross,con}) we first note that (\ref{cross_ell_1}) is independent of $\theta$ and it can be easily checked that it vanishes for $k= i n$. On the other hand, substituting $k=i n$ into (\ref{cross_theta}) and (\ref{cross_k}), we obtain the conditions
$$ 
-16 \sin \theta  \left(i^2 n^2 \ell ^2+1\right) \cos\left(i(n-m)\pi\right)=0
$$ 
 and
$$ 
 16 \pi  \left(i^2 n^2 \ell ^2+1\right) \big(\cos \theta  \cos \left(i m\pi\right)-\cos \left(i n \ell _1\right)\big) =0,
$$ 
respectively. As can be seen, the former is satisfied only in the center of the Brillouin zone or at its edges, $\theta=0,\pm\pi$. Inspecting the latter at those points, we find that it is fulfilled for $\ell_1=\frac{j}{in}\pi$, that is
$$ 
16 \pi  \left(i^2 n^2 \ell ^2+1\right) \big(\cos \theta  \cos \left(i m\pi\right)-\cos \left(j\pi\right)\big) =0 .
$$ 
This condition can be always satisfied depending on the parity of $im$ and $j$: for the same and opposite parity, it happens with $\theta=0$ and $\theta=\pm\pi$, respectively.

Let us summarize the results about the positive spectrum obtained so far:
\begin{itemize}
\item \emph{Flat bands:} a single one, namely $k^2=n^2$, exists if the length parameter $\ell=\frac{1}{n}$ for some $n\in \mathbb{N}$. The existence of the other ones follows from the commensurability of the edges with the loop length: if $\ell_{i}=\frac{p}{q}\pi$ with coprime $p,q\in\mathbb{N}$, the number $k^{2}=q^{2}n^{2}$ is a flat band for any $n\in\mathbb{N}$, being moreover embedded in the continuum for $i=3$ but not necessarily so for $i=1$.
\item Away from the flat bands the spectrum is absolutely continuous having a bandgap structure.
\item Particular gaps may close under commensurability conditions: if $\ell_3=\frac{m}{n}\pi$ with coprime $m,n\in\mathbb{N}$ and $\ell_1=j\ell_3$ with $j\in\mathbb{N}$, neighboring bands touch at the energies $k^2=i^2n^2$ for any $i\in\mathbb{N}$.
\end{itemize}
The dependence of the spectrum on the parameter $\ell_3$ for a fixed length $\ell_1$ of the connecting links showing, in particular, the gap closing is illustrated in Figs.~\ref{Figure03} and \ref{Figure04}.
\begin{figure}[h]
  \centering
\includegraphics[width=0.8\textwidth]{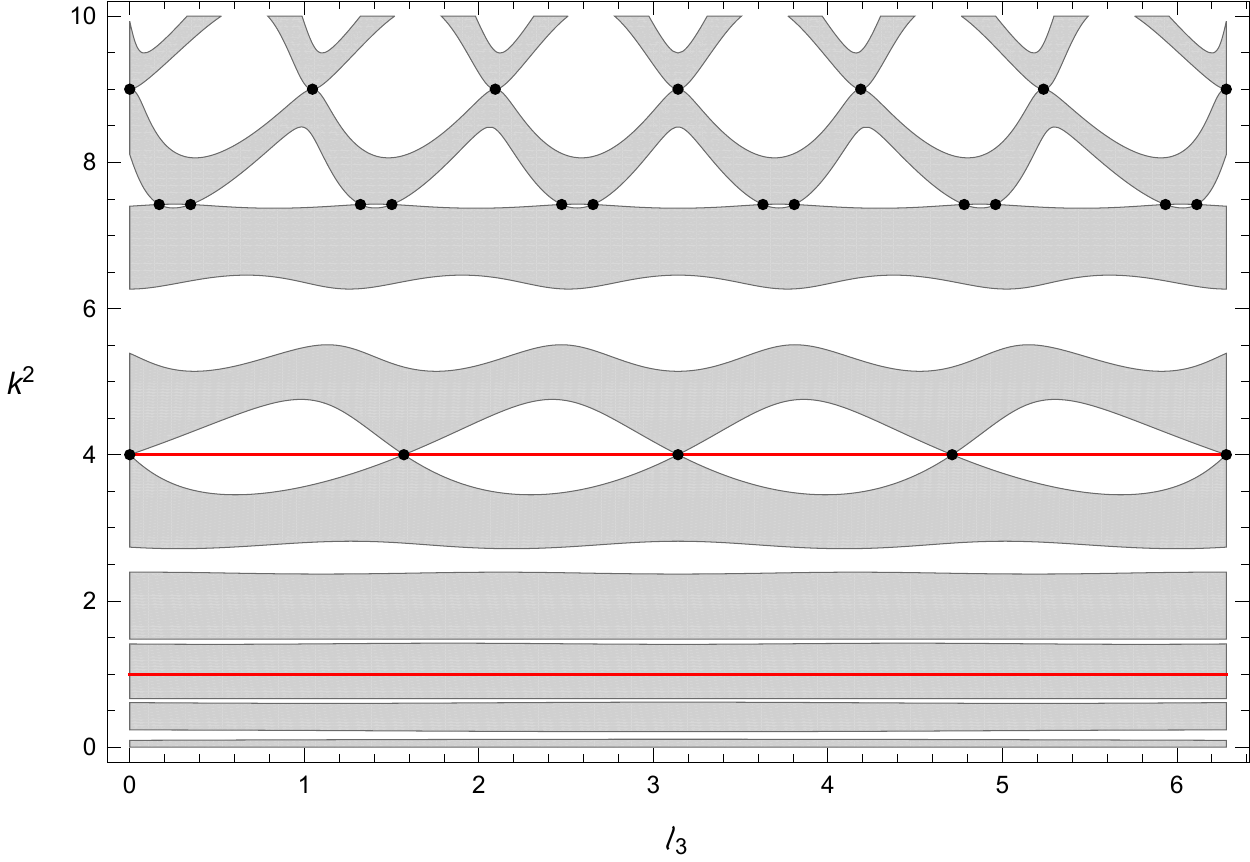}
  \caption{The positive spectrum for $\ell=1$ and $\ell_1=\frac32\pi$}
  \label{Figure03}
\end{figure}

\begin{figure}[h]
  \centering
\includegraphics[width=0.8\textwidth]{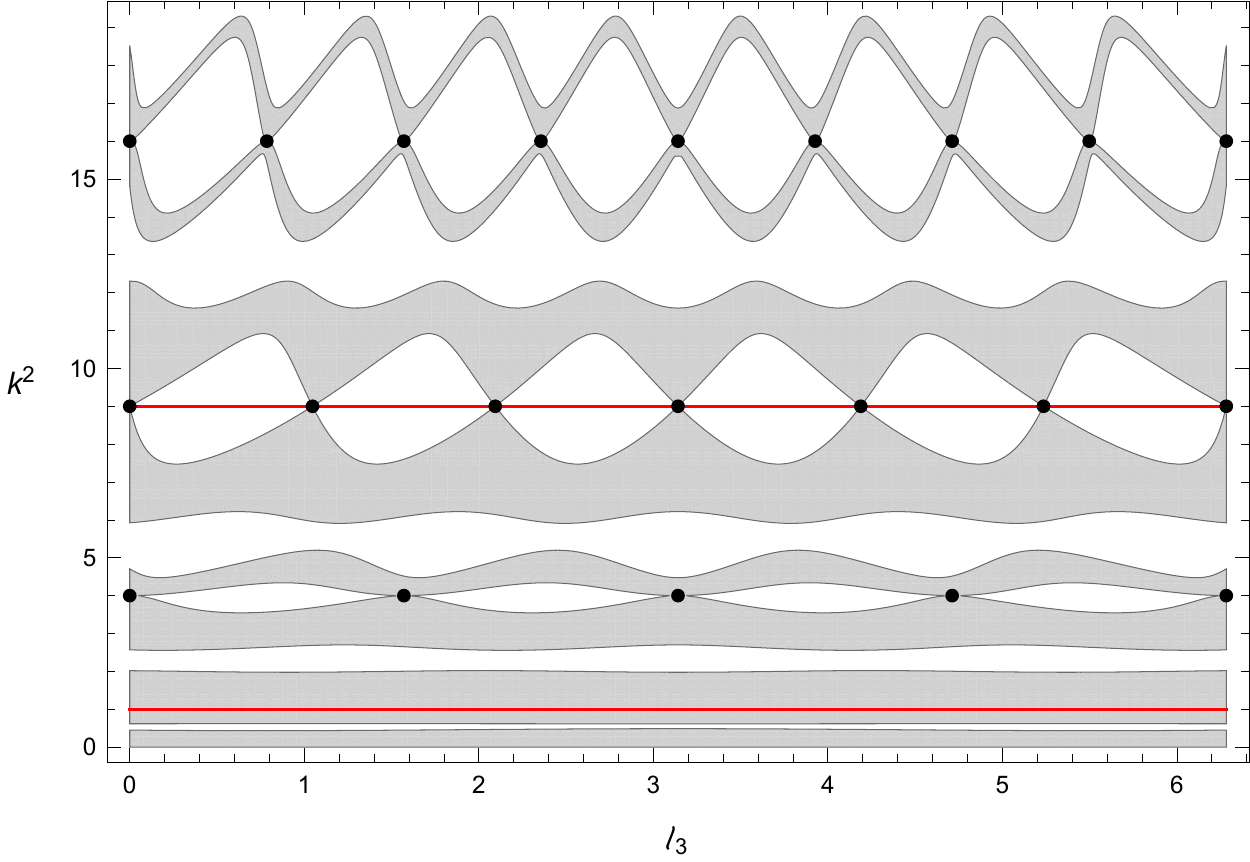}
  \caption{The positive spectrum for $\ell=1$ and $\ell_1=\frac13\pi$}
  \label{Figure04}
\end{figure}

\subsection{Asymptotic behavior of spectral bands}
\label{ss:asympt}

The structure of bands and gaps, in particular, their behavior at high energies are vital for transport properties of such a chain. In \cite{BET20} we have shown that for $\ell=1$, $\ell_2=\ell_3=\pi$, and any $\ell_1>0$ the spectrum is dominated by gaps and the band widths remain bounded; we also identified two types of behavior as $k\to\infty$, the asymptotically constant and decreasing as $\mathcal{O}(k^{-1})$. To get a deeper insight and to see how the spectrum looks like in our more general case, let us substitute from \eqref{abc,gen} into \eqref{gap,abc,gen,pos}; keeping the highest power of $k$, we get
\begin{equation} \label{asymptcond}
-16\;k^8 \ell ^8 \;\sin ^2 k \ell _1  \;\sin ^2 k \left(2 \pi -\ell _3\right) \; \sin ^2 k \ell _3 +\OO(k^6) <0 .
\end{equation}
Consequently, the spectrum is again dominated at high energies by gaps -- as expected for a chain with vertices of degree three -- because bands may exist only in the vicinity of the points
\begin{equation} \label{Sp,points}
k_{j,n}=\frac{n\pi}{\ell_{j}},\quad n\in\mathbb{N},
\end{equation}
at which the leading term vanishes.

Let us take a closer look at their structure. To this end, it is useful to rewrite the spectral condition (\ref{Pos,SC,general,ell_1,2,3}) in the form $A_{4}k^{4}+A_{2}k^{2}+A_{1}k+A_{0}=0$, where
\begin{align*} 
A_{4}:=&\, 4 \ell ^4 \sin  k \ell _1   \sin  k \ell _2  \sin  k \ell _3, \\
A_{2}:=&\, 2 \ell ^2 \left(\cos \theta  \big(\sin  k \ell _2 +\sin  k \ell _3 \big)+\sin k \ell _1 -\sin  k \ell _2  \cos  k \ell _1  \cos  k \ell _3 \right) \\
       &-2 \ell ^2 \sin k \left(\ell _1+\ell _3\right) \cos  k \ell _2 , \\
A_{1}:=&\, 4 k \ell  \sin \theta  \big(\cos  k \ell _3 -\cos  k \ell _2 \big), \\
A_{0}:=&\, 2 \Big(\sin  k\ell_2  \big(\cos\theta -\cos k\ell_1 \cos k\ell_3 \big) +\cos\theta \sin k\ell_3 -\sin k\left(\ell_1+\ell_3\right) \cos k\ell_2 \Big) \\
       &+\sin  k \ell _1  \big(3 \sin  k \ell _2  \sin  k \ell _3 +2\big).
\end{align*}
Since the cubic term is absent, the spectral condition acquires the asymptotic form
\begin{equation}\label{asycon}
A_{4}+\frac{A_{2}}{k^{2}}=\mathcal{O}(k^{-3}),
\end{equation}
from which we can deduce the behavior of bands in the vicinity of the points \eqref{Sp,points} at which the wavelength is commensurate with an edge.

The analysis in the three cases is similar, thus we present a detailed discussion in the first of them only and restrict ourselves to the results in the others. To describe the bands around $k=k_{1,n} = \frac{n\pi}{\ell_{1}}$, we set
$$ 
k=\frac{n\pi}{\ell_{1}}+\delta,\quad n\in\mathbb{N}.
$$ 
In that case we have $k^{-2}=\Big(\frac{\ell _1}{\pi  n}\Big)^2+\mathcal{O}(n^{-3})$ as $n\to\infty$, and the expansions
\begin{align*}
&\cos k \ell _3 =\cos \frac{\pi  n \ell _3}{\ell _1}-\delta  \ell _3 \sin \frac{\pi  n \ell _3}{\ell _1} -\frac{\delta ^2 \ell _3^2}{2}  \cos \frac{\pi  n \ell _3}{\ell _1} +\frac{\delta ^3 \ell _3^3}{6}  \sin  \frac{\pi  n \ell _3}{\ell _1}  +\mathcal{O}(\delta ^4), \\
&\sin k \ell_3 = \sin\frac{\pi  n \ell _3}{\ell _1} +\delta  \ell _3 \cos\frac{\pi  n \ell _3}{\ell _1} -\frac{\delta ^2 \ell _3^2}{2}  \sin\frac{\pi  n \ell _3}{\ell _1} -\frac{\delta ^3 \ell _3^3}{6}  \cos \frac{\pi  n \ell _3}{\ell _1} +\mathcal{O}(\delta ^4), \\
&\sin k \ell_1=  \delta  (-1)^n \ell _1-\frac{1}{6} \delta ^3 (-1)^n \ell _1^3        +\mathcal{O}(\delta ^4), \\
& \cos k \ell_1 =   (-1)^n-\frac{1}{2} \delta ^2 (-1)^n \ell _1^2        +\mathcal{O}(\delta ^4), \\
&\sin k \ell_2 = \sin\frac{\pi  n \ell _2}{\ell _1} +\delta  \ell _2 \cos\frac{\pi  n \ell _2}{\ell _1} -\frac{\delta ^2 \ell _2^2}{2}  \sin\frac{\pi  n \ell _2}{\ell _1} -\frac{\delta ^3 \ell _2^3}{6}  \cos \frac{\pi  n \ell _2}{\ell _1} +\mathcal{O}(\delta ^4), \\
& \cos k \ell_2 =  \cos\frac{\pi  n \ell _2}{\ell _1} -\delta  \ell _2 \sin\frac{\pi  n \ell _2}{\ell _1} -\frac{\delta ^2 \ell _2^2}{2}  \cos\frac{\pi  n \ell _2}{\ell _1} +\frac{\delta ^3 \ell _2^3}{6}  \sin \frac{\pi  n \ell _2}{\ell _1} +\mathcal{O}(\delta ^4),
\end{align*}
in terms of $\delta$. Substituting these relations in (\ref{asycon}), we see that $\delta$ has to satisfy the following relation,
\begin{align}
\frac{\beta _2 \cos \theta +\beta _1}{\pi ^2 n^2} &+\delta  \Big(\beta _3+\frac{\beta _5 \cos \theta +\beta _4}{\pi ^2 n^2}\Big)+\delta ^2 \Big(\beta _6+\frac{\beta _8 \cos \theta +\beta _7}{\pi ^2 n^2}\Big) \nonumber \\ & +\delta ^3 \Big(\beta _9+\frac{\beta _{10}+\beta _{11} \cos \theta }{\pi ^2 n^2}\Big) + \mathcal{O}(\delta^4)=0, \label{eq,beta,Asy}
\end{align}
where the coefficients $\beta _i$, $i=1,...,11$  are functions of the parameters $n,\ell,\ell_j$; it is straightforward if tedious to find them explicitly:
\begin{align}\label{beta,ell_1}
&\beta _1=   2 (-1)^{n+1} \ell ^2 \ell _1^2 \sin \frac{2 \pi ^2 n}{\ell _1}       , \qquad \beta _2=2 \ell ^2 \ell _1^2 \left(\sin \frac{\pi  n \ell _2}{\ell _1}+\sin \frac{\pi  n \ell _3}{\ell _1}\right),  \nonumber  \\[14pt]
&\beta _3=  (-1)^n \ell _1 \ell ^4 \sin \frac{\pi  n \ell _2}{\ell _1} \sin \frac{\pi  n \ell _3}{\ell _1}     ,   \nonumber  \\[14pt]
&\beta _4=   2 (-1)^{n+1} \ell ^2 \ell _1^2 \left(2 \pi  \cos \frac{2 \pi ^2 n}{\ell _1}+\ell _1 \left(\cos\frac{\pi  n \ell _2}{\ell _1} \cos \frac{\pi  n \ell _3}{\ell _1}-1\right)\right)   ,   \nonumber  \\[14pt]
&\beta _5=  2 \ell ^2 \ell _1^2 \left(\ell _2 \cos \frac{\pi  n \ell _2}{\ell _1}+\ell _3 \cos \frac{\pi  n \ell _3}{\ell _1}\right)      ,   \nonumber  \\[14pt]
&\beta _6=   (-1)^n \ell ^4 \ell _1 \left(\ell _2 \sin \frac{\pi  n \ell _3}{\ell _1} \cos \frac{\pi  n \ell _2}{\ell _1}+\ell _3 \sin \frac{\pi  n \ell _2}{\ell _1} \cos\frac{\pi  n \ell _3}{\ell _1}\right),    \\[14pt]
&\beta _7=    (-1)^n \ell ^2 \ell _1^2 \left(\ell _1^2 \sin \frac{2 \pi ^2 n}{\ell _1}+4 \pi ^2 \sin \frac{2 \pi ^2 n}{\ell _1}\right)    \nonumber  \\
&\quad\quad  +2 (-1)^n \ell ^2 \ell _1^3 \left(\ell _2 \sin \frac{\pi  n \ell _2}{\ell _1} \cos \frac{\pi  n \ell _3}{\ell _1}+\ell _3 \sin \frac{\pi  n \ell _3}{\ell _1} \cos \frac{\pi  n \ell _2}{\ell _1}\right)   ,\nonumber  \\[14pt]
&\beta _8= -\ell ^2 \ell _1^2 \left(\ell _2^2 \sin \frac{\pi  n \ell _2}{\ell _1}+\ell _3^2 \sin \frac{\pi  n \ell _3}{\ell _1}\right)     ,   \nonumber  \\[14pt]
&\beta _9=    \frac{1}{6} (-1)^n \ell ^4 \ell _1 \left(6 \ell _2 \ell _3 \cos \frac{\pi  n \ell _2}{\ell _1} \cos \frac{\pi  n \ell _3}{\ell _1}-\left(\ell _1^2+3 \left(\ell _2^2+\ell _3^2\right)\right) \sin \frac{\pi  n \ell _2}{\ell _1} \sin \frac{\pi  n \ell _3}{\ell _1}\right),      \nonumber   \\[14pt]
&\beta_{10}=   (-1)^n \ell ^2 \ell _1^3 \left(\left(\ell _2^2+\ell _3^2\right) \cos \frac{\pi  n \ell _2}{\ell _1} \cos \frac{\pi  n \ell _3}{\ell _1}-2 \ell _2 \ell _3 \sin \frac{\pi  n \ell _2}{\ell _1} \sin \frac{\pi  n \ell _3}{\ell _1}\right)         \nonumber  \\
&\quad\quad +\frac{1}{3} (-1)^n \ell ^2 \ell _1^2 \left(\ell _1^3 \left(\cos \frac{\pi  n \ell _2}{\ell _1}\cos \frac{\pi  n \ell _3}{\ell _1}-1\right)+6 \pi  \ell _1^2 \cos \frac{2 \pi ^2 n}{\ell _1}+8 \pi ^3 \cos \frac{2 \pi ^2 n}{\ell _1}\right) ,    \nonumber  \\[14pt]
&\beta_{11}= -\frac{1}{3} \ell ^2 \ell _1^2 \left(\ell _2^3 \cos \frac{\pi  n \ell _2}{\ell _1}+\ell _3^3 \cos \frac{\pi  n \ell _3}{\ell _1}\right)           \nonumber .
\end{align}
In the leading order, $\delta$ can be determined from the first two terms in Eq. (\ref{eq,beta,Asy}),
$$ 
\delta=\frac{-\beta _2 \cos\theta -\beta _1}{\beta _5 \cos\theta+\beta _4+\pi ^2 \beta _3 n^2},
$$ 
so that it behaves asymptotically as
$$ 
\delta=\frac{1}{\pi ^2 n^2}\;\frac{-\beta _2 \cos\theta-\beta _1}{\beta _3}+\mathcal{O}(n^{-4}),
$$ 
provided that $\beta_3$ which also depends on $n$ \emph{does not vanish and stays away from zero}. In that case we obtain the following general asymptotic expressions for energy and the width of the bands
$$ 
E_{n}(\theta)=k_\theta^2=\Big(\frac{\pi n}{\ell_1}\Big)^2+\frac{2}{\ell_1}\;\frac{1}{\pi n}\;\frac{-\beta _2 \cos\theta-\beta _1}{\beta _3}+\mathcal{O}(n^{-3}),
$$ 
\begin{equation} \label{bandwidth1}
\triangle E_{n}=k_{\pi}^2-k_0^2=\frac{4}{\ell_1}\;\frac{1}{\pi  n}\,\Big|\frac{\beta _2}{\beta _3}\Big|+\mathcal{O}(n^{-3}),
\end{equation}
as $n\rightarrow\infty$ which shows, in particular, that the band width behaves like $\mathcal{O}(n^{-1})$, that is, $\mathcal{O}(k^{-1})$ in terms of the corresponding momentum values. In the exceptional cases when $\beta_{3}$ vanishes, higher powers of $\delta$ in (\ref{eq,beta,Asy}) must be taken into account.

However, even in the generic case the gaps may exhibit a different behavior depending on the commensurability of the parameters involved:
\begin{itemize}
\item If $\frac{\ell_2}{\ell_1}$ and $\frac{\ell_3}{\ell_1}$ are both rational, $\beta_3$ as function of $n$ is periodic and, in particular, it has periodically distributed zeros. This may happen for all $n$ when all the lengths involved coincide, as is the case in Example~1 below, but if ratios are not that simple, there are $n$'s for which $\beta_3$ is nonzero. In view of the periodicity, those have the distance from zero bounded from below by a positive number, and therefore the $\mathcal{O}(n^{-1})$ asymptotics \eqref{bandwidth1} applies to them.
\item If, on the other hand, both the $\frac{\ell_2}{\ell_1}$ and $\frac{\ell_3}{\ell_1}$ are irrational, $\beta_3$ is nonzero for any $n$, but one can find sequences of gap indices along which $\beta_3$ tends to zero in correspondence with rational approximations of the given ratios and some band may be considerably wider than their neighboring ones.
\end{itemize}
In addition to that, of course, there are various mixed cases. To get a better idea, let us look at some examples:
\begin{example} \label{ex1,section}
{\rm The simplest situation refers to a particular case of the model considered in \cite{BET20} when $\ell_1=\ell_2=\ell_3=\pi$. Substituting these values into (\ref{beta,ell_1}), we get $\beta_1=\beta_2=\beta_3=\beta_6=\beta_7=\beta_8=0$ and
$$ 
\beta_4=  -\beta_5 = -4\pi ^3 (-1)^n \ell^2, \;\; \beta_9=\pi^3 (-1)^n \ell^4, \;\; \beta_{10} = -10\beta _{11}=\frac{20}{3} \pi^5 (-1)^n \ell^2.
$$ 
Substituting these values into \eqref{eq,beta,Asy}) we infer that
$$ 
\delta=\frac{2}{\ell \pi  n}\,\sqrt{1-\cos\theta}+\mathcal{O}(n^{-3}),
$$ 
referring to the energy $E_n(\theta)=k_\theta^{2}= n^2+\frac{4}{\pi\ell}\,\sqrt{1-\cos\theta}+\mathcal{O}(n^{-2})$ as a function of the quasimomentum, hence the band widths are
$$ 
\triangle E_n=k_{\pi }^2-k_0^2= \frac{4\sqrt{2}}{\pi\ell}  +\mathcal{O}(n^{-2}),
$$ 
being asymptotically constant as $n\rightarrow\infty$. As needed, the dimension of this quantity is that of inverted squared length, recall that $\ell_1=\pi$.
}
\end{example}

It may happen that the band widths have several nonzero limits as $n\rightarrow\infty$.
\begin{example} 
{\rm Let $\ell_1= \ell_3 = \frac{4}{11}\pi$ and $\ell_2=\frac{18}{11}\pi$ with $\ell=1$.  In this case, we have $\beta_{3}=0$ for all $n\in\mathbb{N}$ while $\beta_{4},\beta_{5},\beta_{10},\beta_{11}$ are all nonzero. The remaining coefficients depend on the parity: we have $\beta_{9}=0$ for odd $n$ and $\beta_{1}=\beta_{2}=\beta_{6}=\beta_{7}=\beta_{8}=0$ for the even ones. Consequently, \eqref{eq,beta,Asy}) implies two types of asymptotic behavior. For even $n$, solving
$$
\delta  \left(\frac{\beta _5 \cos \theta +\beta _4}{\pi ^2 n^2}\right)+\delta ^3 \left(\beta _9+\frac{\beta _{10}+\beta _{11} \cos \theta }{\pi ^2 n^2}\right) =0,
$$
we arrive at
$$
\triangle E_n=\frac{2}{\ell _1}\,\frac{\sqrt{\beta_5-\beta_4}-\sqrt{-\beta_4-\beta_5}}{\sqrt{\beta_9}}+\mathcal{O}(n^{-2}),
$$
which assumes two different values of the leading term, $\frac{11}{3\pi}\sqrt{11}\approx 3.8701$ and $\frac{11}{3\pi}(\sqrt{11}-2)\approx 1.537$ for $n=4m$ and $n=4m-2$, respectively. For odd $n$, on the other hand, solving
$$
\frac{\beta _2 \cos \theta +\beta _1}{\pi ^2 n^2}+\delta  \left(\frac{\beta _5 \cos \theta +\beta _4}{\pi ^2 n^2}\right)+\delta ^2 \left(\beta _6+\frac{\beta _{7}+\beta _{8} \cos \theta }{\pi ^2 n^2}\right) =0,
$$
we get two roots; using the fact that $\beta_1+\beta_2=0$ holds in this case, they both simplify to the same expression,
$$
\triangle E_n=\frac{2}{\ell_1}\;  \sqrt{\frac{\beta_2-\beta_1}{\beta_6}}   +\mathcal{O}(n^{-1}),
$$
in which that leading term value is $\frac{11}{\pi}\approx 3.501$. Hence all the band widths here are asymptotically constant assuming three different values as shown in Fig.~\ref{Figure05}.
\begin{figure}[h]
  \centering
\includegraphics[width=0.8\textwidth]{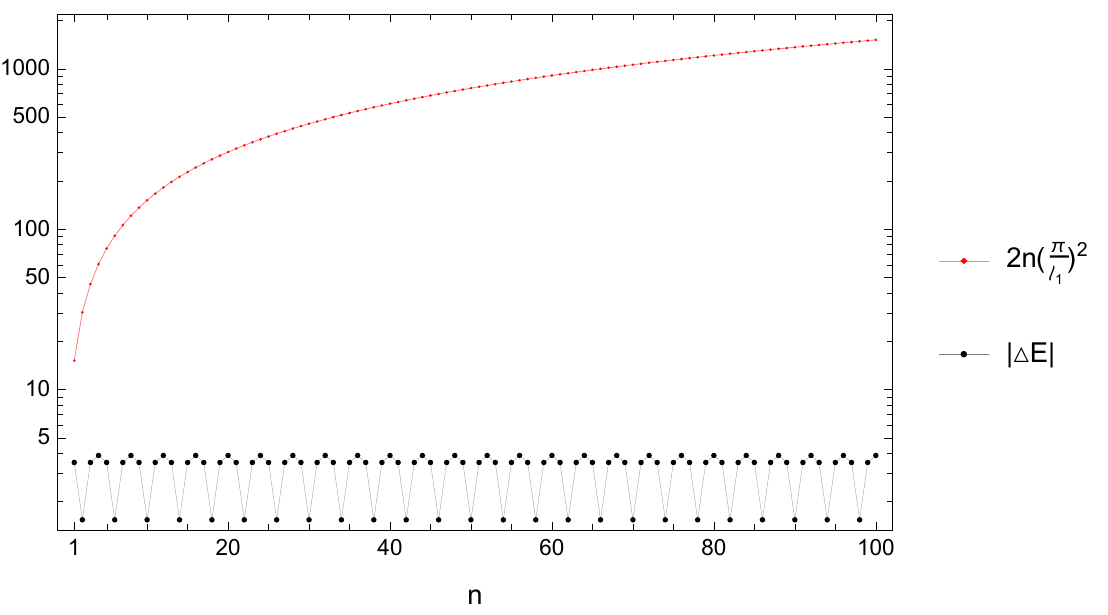}
  \caption{Band width leading term \emph{vs.} band index (referring to bands around the points $\frac{n\pi}{\ell_{1}}$) for $\ell_1= \ell_3 = \frac{4}{11}\pi$, $\ell_2=\frac{18}{11}\pi$, and $\ell=1$ in comparison with the gap width}
  \label{Figure05}
\end{figure}
The plot is rather simple but it is useful for comparison with the other examples given below. To make it more visible how the values are switching as the band index is changing, we draw here and in the following the lines joining points referring to the adjacent values of $n$. Recall also that the quantity plotted here is the leading term which represents a reliable characterization of the band width provided it is much smaller than the width of the adjacent \emph{gaps}. For comparison we plot here also the (leading term of the) latter; we see that the condition is satisfied, the better the large the band index is.
}
\end{example}

In other cases the asymptotically constant bands may be combined with shrinking ones.
\begin{example} 
{\rm Let now $\ell_1= \frac{\pi}{5}$, $\ell_2=13\frac{\pi}{7}$, $\ell_3=\frac{\pi}{7}$, and $\ell=1$, in which case $\beta_1= \beta_2=0$ for any $n\in\mathbb{N}$, while $\beta_{4},\beta_{5},\beta_{9},\beta_{10},\beta_{11}$ never vanish and $\beta_{3}=\beta_{6}=\beta_{7}=\beta_{8}=0$ holds for $n=7m$. In the latter case, we use \eqref{eq,beta,Asy}) and solve
$$
\delta\Big(\frac{\beta_5 \cos\theta +\beta_4}{\pi^2 n^2}\Big)+\delta^3 \left(\beta_9+\frac{\beta_{11} \cos\theta +\beta_{10}}{\pi^2 n^2}\right)=0\,;
$$
this yields bands of asymptotically constant widths appearing as spikes in Fig.~\ref{Figure06} below,
$$
\triangle E_n=\frac{2}{\ell_1}\,\frac{\sqrt{\beta_5-\beta_4}-\sqrt{-\beta_4-\beta_5}}{\sqrt{\beta_9}}+\mathcal{O}(n^{-2})
$$
with the leading term $\frac{28}{\pi}\sqrt{\frac{10}{13}}\approx 7.817$.
\begin{figure}[h]
  \centering
\includegraphics[width=0.8\textwidth]{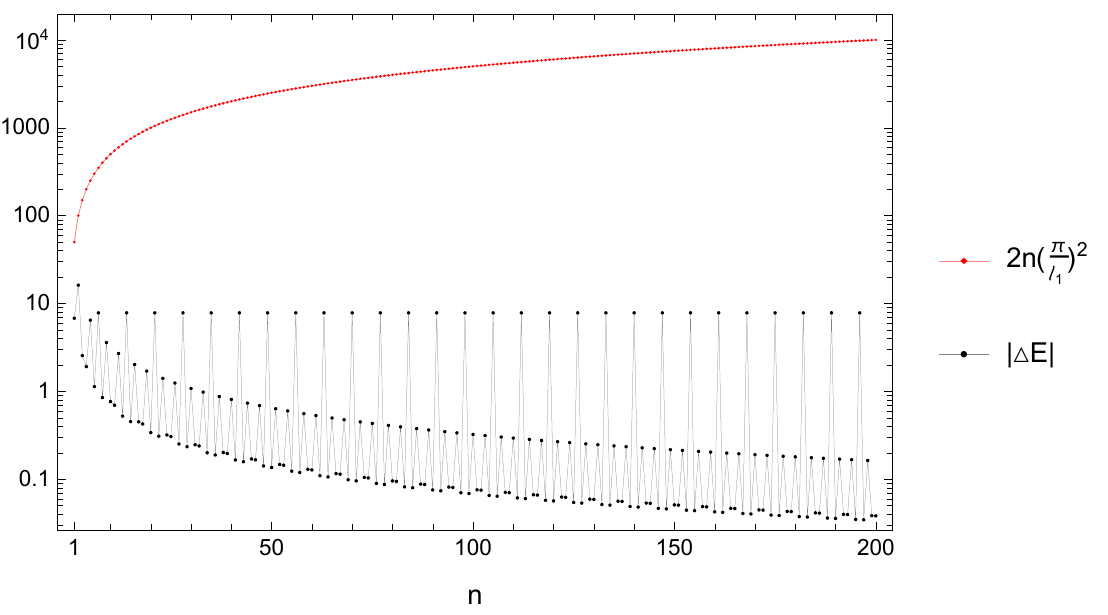}
  \caption{The leading term \emph{vs.} band index for $\ell_1= \frac{\pi}{5}$, $\ell_2=13\frac{\pi}{7}$, $\ell_3=\frac{\pi}{7}$, and $\ell=1$, again in comparison with the gaps width}
  \label{Figure06}
\end{figure}
On the other hand, if $n$ is not a multiple of seven, equation \eqref{eq,beta,Asy}) leads to
$$
\delta \left(\beta_3+\frac{\beta_5 \cos\theta +\beta_4}{\pi^2 n^2}\right)+\delta^2 \left(\beta_6+\frac{\beta_8 \cos\theta +\beta_7}{\pi^2 n^2}\right)=0,
$$
producing bands of decreasing widths,
\begin{equation} \label{bandwidth5638}
\triangle E_n=\frac{4\pi}{\ell_1}\,\frac{1}{n}\,\bigg|\frac{\beta_5 \beta_6-\beta_3 \beta_8}{\pi^2 \beta_6^2}\bigg|+\mathcal{O}(n^{-2}).
\end{equation}
Fig.~\ref{Figure06} shows that the coefficient of $n^{-1}$ in the leading term is periodic (mod 7) and that the gaps dominates as $n$ increases.
}
\end{example}

On the other hand, it may happen that \emph{`almost all' bands} are shrinking with respect to the band index which is the case when we make some of the edges incommensurate.
\begin{example} 
{\rm Choose $\ell_1=1$, $\ell_2=\ell_3=\pi$, and $\ell=1$, then the first three coefficients in \eqref{beta,ell_1} are
$$
\beta_{1}=2(-1)^{n+1} \sin 2n \pi^2,\quad \beta_{2}=4\sin n\pi^2, \quad \beta_{3}= (-1)^n \sin^2 n\pi ^2\,;
$$
since the second and the third do not vanish, the band width asymptotics is given by \eqref{bandwidth1}. Its dependence on the band index plotted in Fig.~\ref{Figure07} shows that even the `spikes' making the locally widest bands are decreasing.
\begin{figure}[h]
  \centering
\includegraphics[width=0.8\textwidth]{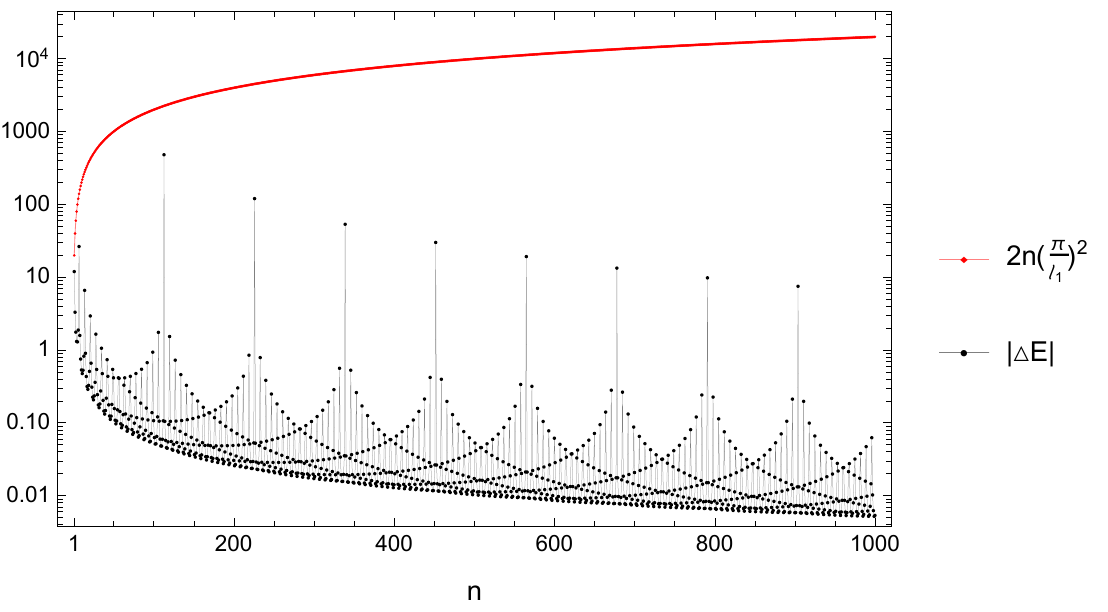}
  \caption{The leading term \emph{vs.} band index for $\ell_1=1$, $\ell_2=\ell_3=\pi$, and $\ell=1$}
  \label{Figure07}
\end{figure}
However, the used scale is too narrow to make a general conclusion; expanding the range of the index $n$ two or three orders of magnitude, one finds that there is a sequence of very rare bands the widths of which do not decrease. At such a large scale, we also see that the gaps eventually dominate even over such exceptional bands.
}
\end{example}

In the above example the band width plot still shows a lot of regularity due to the fact that $\ell_2=\ell_3$. Not surprisingly, this changes if make all the involved edges incommensurate.
\begin{example} \label{ex5,section}
{\rm Suppose that the two parts of the ring are in golden ratio, $\frac{\ell_2}{\ell_3}=\frac{\sqrt{5}+1}{2}$, and $\ell_1=\ell=1$. Then all the $\beta_i$'s are nonzero and the band widths are given by \eqref{bandwidth1}. On the other hand, changing the length of the connecting link to $\ell_1=\pi$ we get $\beta_1=\beta_2=0$ and formula \eqref{bandwidth5638} applies.
\begin{figure}[h]
  \centering
\includegraphics[width=0.75\textwidth]{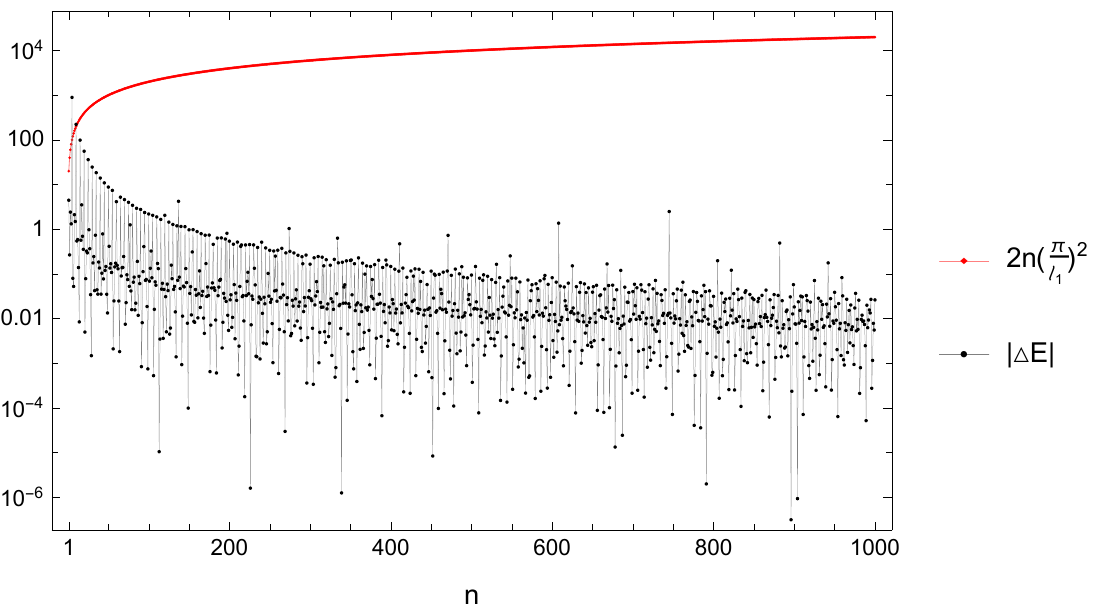}
  \caption{The leading term \emph{vs.} band index for $\ell_1=1$, $\frac{\ell_2}{\ell_3}=\frac{1+\sqrt{5}}{2}$, and $\ell=1$}
  \label{Figure08}
\end{figure}
\begin{figure}[h]
  \centering
\includegraphics[width=0.75\textwidth]{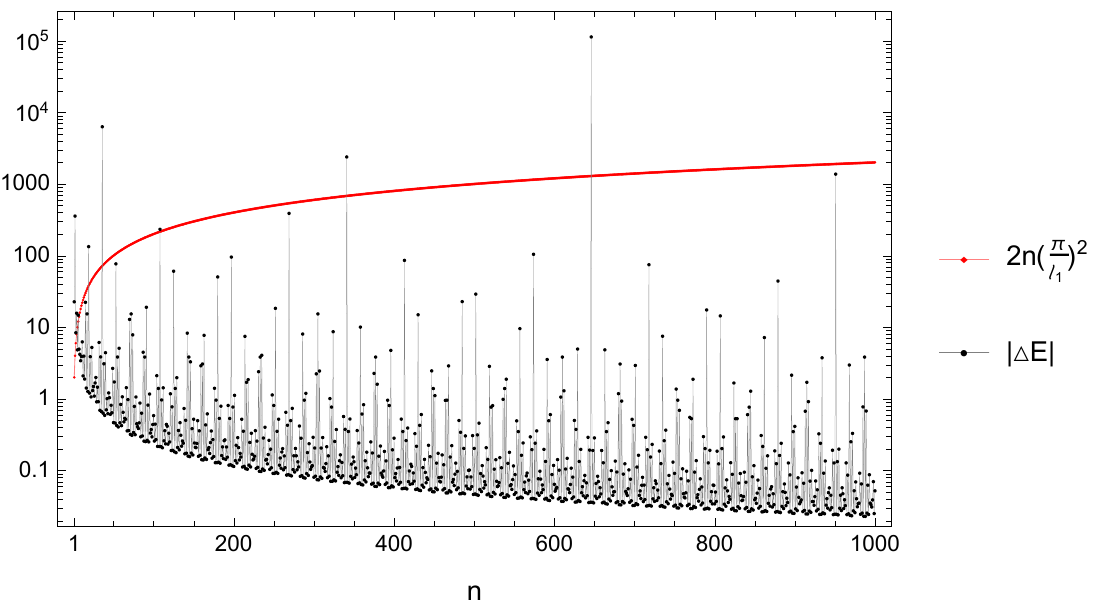}
  \caption{The same as above but with $\ell_1=\pi$}
  \label{Figure09}
\end{figure}
Then plots are irregular as seen from Figs.~\ref{Figure08} and \ref{Figure09}, and moreover, they depend on the choice of $\ell_1$; note that in the second case the latter is commensurate with the total length of the ring. At the same time, the influence of the parameter $\ell$ determining the length scale is less pronounced, changing it in the first case to $\ell=\pi$ we get a plot very similar to that of Fig.~\ref{Figure08}. The comparison with gap width mentioned in the previous examples applies again, however, the `extremely diophantine' character of the ratios involved means that we have to go very high values of the index, $n\gtrsim 10^6$, to have the gaps dominating over \emph{all} the bands.
}
\end{example}

\subsection{The other two band series}

To describe the bands around the points $\frac{n\pi}{\ell_j},\, j=2,3$, we proceed in a similar way so we limit ourselves to sketching the results without going into details. Setting again
$$ 
k_{n,j}=\frac{n\pi}{\ell_j}+\delta,\quad n\in\mathbb{N},\nonumber
$$ 
we write again the spectral condition in the form of the expansion \eqref{eq,beta,Asy} where the leading coefficients are now
\begin{align}\label{beta,ell_2}
&\beta_1=  2\ell^2 \ell_2^2 \left(\sin\frac{\pi n\ell_1}{\ell_2}+(-1)^{n+1} \sin\frac{\pi n \left(\ell_1+\ell_3\right)}{\ell_2}\right), \nonumber  \\[14pt]
&\beta_2= 2\ell^2 \ell_2^2 \sin\frac{\pi n\ell_3}{\ell_2}, \quad
\beta_3=  (-1)^n \ell ^4 \ell _2 \sin\frac{\pi n\ell_1}{\ell_2} \sin\frac{\pi n\ell_3}{\ell_2},
\end{align}
for $j=2$, and
\begin{align}\label{beta,ell_3}
&\beta_1= 2\ell^2 \ell_3^2 \left(\sin\frac{\pi n\ell_1}{\ell_3}+(-1)^{n+1} \sin\frac{\pi n\left(\ell_1+\ell_2\right)}{\ell_3}\right) , \nonumber  \\[14pt]
&\beta_2= 2 \ell^2 \ell_3^2 \sin\frac{\pi n\ell_2}{\ell_3}, \quad
\beta_3= (-1)^n \ell^4 \ell _3 \sin\frac{\pi n\ell_1}{\ell_3} \sin\frac{\pi n\ell_2}{\ell_3},
\end{align}
for $j=3$. In the generic case when $\beta_3$ stays away from zero the asymptotic behavior of the solution to \eqref{eq,beta,Asy} is given by \eqref{bandwidth1} again, hence replacing $\ell_1$ by $\ell_j$ and using the coefficients \eqref{beta,ell_2} and \eqref{beta,ell_3}, we arrive at
$$ 
\triangle E_n = \frac{8}{\pi n\ell^2}\, \Big|\sin\frac{\pi n\ell_1}{\ell_j}\Big|^{-1} +\mathcal{O}(n^{-3}),\quad j=2,3,
$$ 
as $n\rightarrow\infty$. In the other cases we have to resort to higher powers of $\delta$ in (\ref{eq,beta,Asy}) similarly as we did that in the previous section.

\subsection{The band width asymptotic behavior: an overall view}

The entire positive spectrum naturally combines all the bands mentioned above. The band structure depends on the ratios of the lengths involved. To get a better idea, instead of combining the results of the previous two sections, we compute the spectrum directly from \eqref{band,abc,gen,pos} in two situations corresponding to Examples~\ref{ex1,section} and \ref{ex5,section}; the results are plotted in Figs.~\ref{Figure10} and \ref{Figure11}, respectively.
\begin{figure}[h]
  \centering
\includegraphics[width=0.9\textwidth]{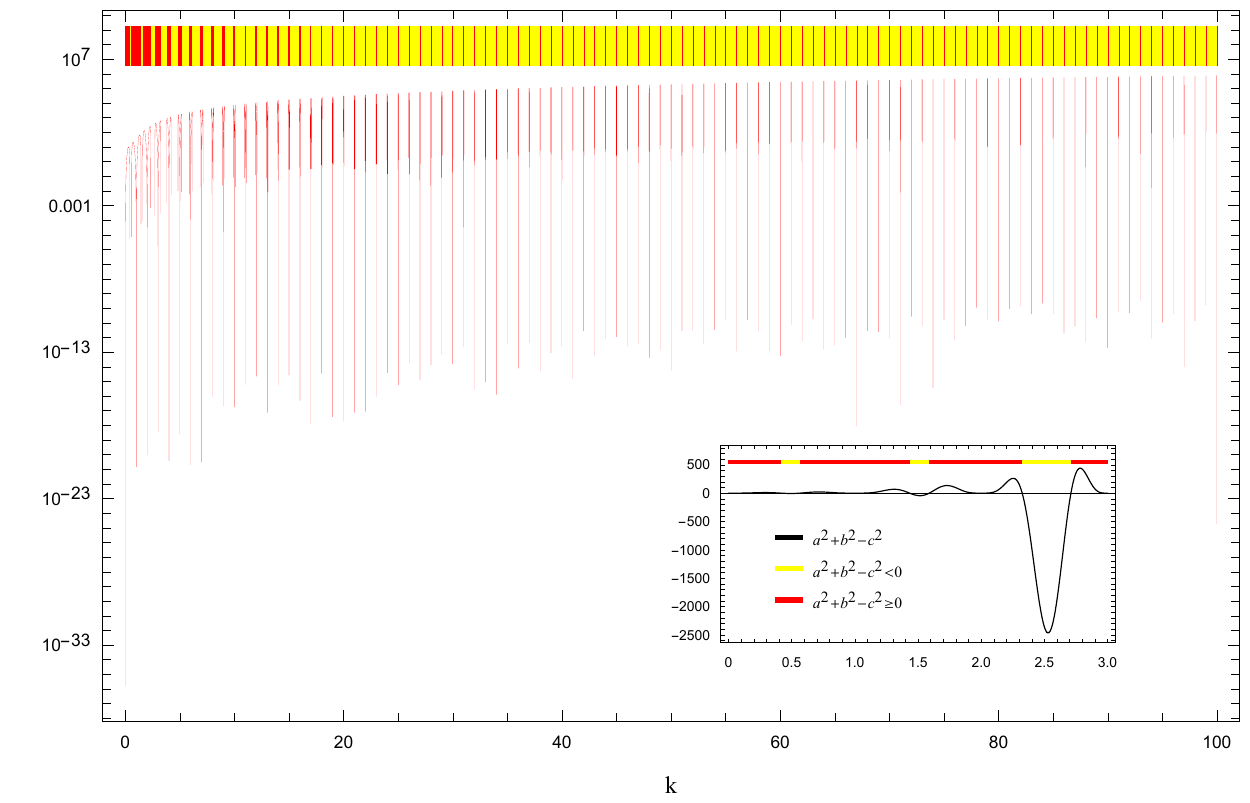}
  \caption{The full band spectrum in the situation of Example~\ref{ex1,section} with $\ell=1$}
  \label{Figure10}
\end{figure}
We see the difference between the two situations: in the first case when the edge lengths are the same the band pattern is periodic in the momentum scale, while irrational relations produce more bands (roughly three times as many in view of \eqref{asymptcond}) distributed in an irregular way.
\begin{figure}[h]
  \centering
\includegraphics[width=0.9\textwidth]{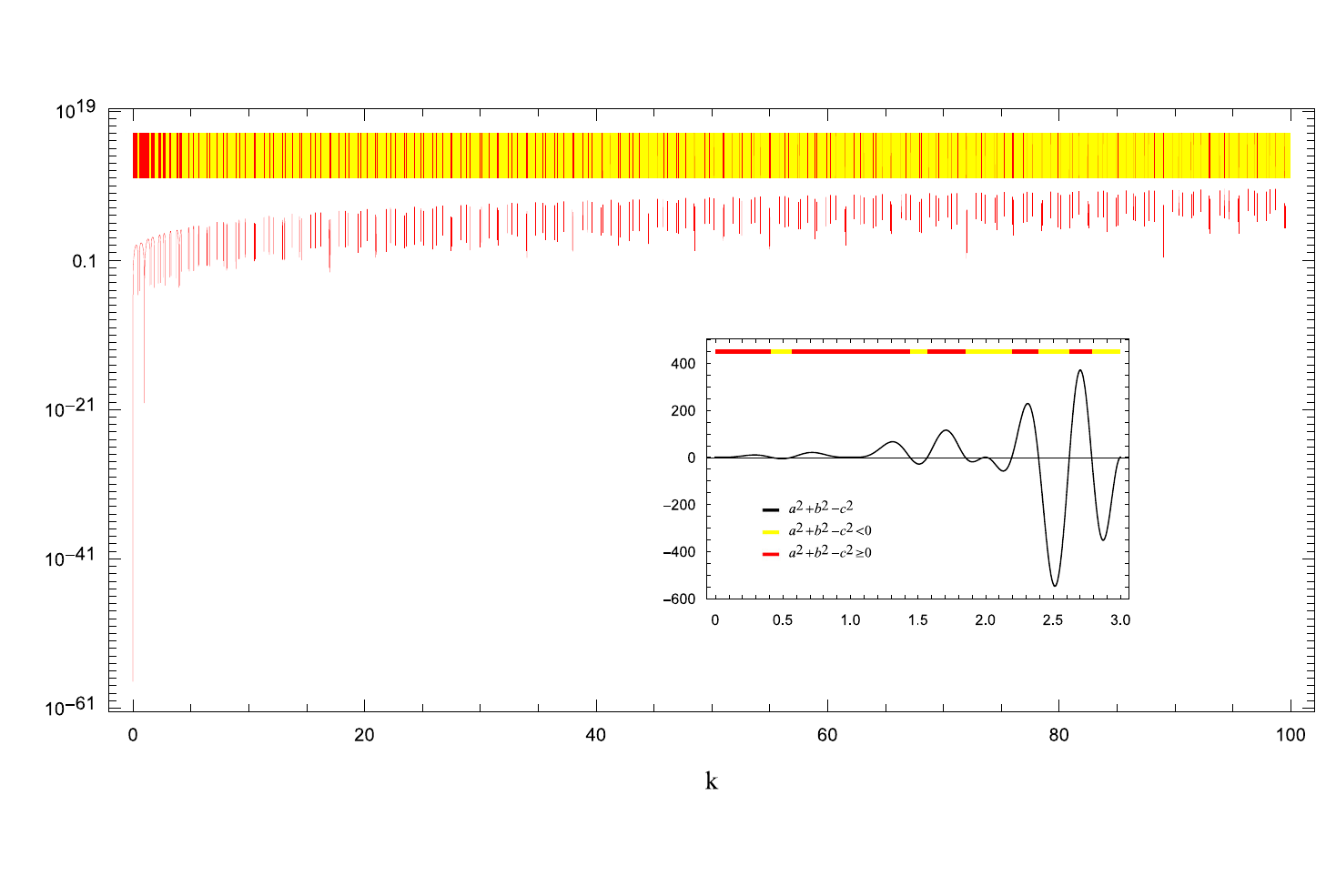}
  \caption{The full band spectrum in the situation of Example~\ref{ex5,section} with $\ell_1=\pi$}
  \label{Figure11}
\end{figure}
To get an understandable picture we choose a more narrow momentum range than in Figs.~\ref{Figure07}--\ref{Figure09}. On the other hand, the obtained band patterns are exact, not only asymptotic. The curves below them show positive values of $a^2+b^2-c^2$ as a function of $k$. Since their shape is still not well seen at the chosen scale, we show in the insets of Figs.~\ref{Figure10} and \ref{Figure11} this function -- its values now in the linear, not logarithmic scale -- over a shorter interval corresponding to the lowest four or five bands.

Despite their different structure, the band spectra in all cases have a common property. As we have noted in the opening of Sec.~\ref{ss:asympt}, the bands can occur (at the momentum scale) only in the vicinity of the points \eqref{Sp,points}. As a union of three periodic sequences, the density of this set is uniform in the sense that the number of its points in an interval $(k,k')$ tends to $\big(2+\frac{\ell_1}{\pi}\big)|k'-k|$ as $|k'-k|\to\infty$. At the same time, it follows from \eqref{bandwidth1} that the band widths are $\mathcal{O}(k^{-1})$ as $k\to\infty$. Translated to the energy scale, $k^2$, this means that the density of points $k^2_{j,n}$ decreases as $\mathcal{O}(k^{-1})$ while the bands are at most constant. In terms of the \emph{probability of belonging the (positive) spectrum} introduced by Band and Berkolaiko \cite{BB13} it means that
\begin{equation}\label{prob,l>0}
P_{\sigma}(H):=\lim_{K\to\infty} \frac{1}{K}\left|\sigma(H)\cap[0,K]\right| = 0,
\end{equation}
where, of course, zero can be replaced here by any fixed positive number.

\subsection{The negative spectrum}

The condition determining the spectrum in the negative half of the real axis can be obtained from \eqref{Pos,SC,ell_1,3} by replacing $k$ with $i\kappa$; this gives
\begin{align}\label{Neg,SC,ell_1,3}
& 16\sinh\pi\kappa \big(\cos\theta (\kappa^2\ell^2-1) \cosh\kappa(\pi-\ell_3) +2\kappa\ell \sin\theta \sinh\kappa(\pi-\ell_3)\big)\nonumber\\
& -(\kappa^2\ell^2+1)^2 \big(\sinh\kappa(2\pi-\ell_1) +2\sinh\kappa\ell _1 \cosh 2\kappa(\pi-\ell_3)\big)\nonumber\\
& +8(\kappa^2\ell^2-1) \sinh\kappa\ell_1 +(\kappa^2\ell^2-3)^2 \sinh\kappa(\ell_1+2\pi)=0.
\end{align}
Mimicking the argument that led to \eqref{band,abc,gen,pos}, we arrive at the band condition
\begin{equation}\label{band,Neg,Gen}
128\,\sinh^2\pi\kappa\: \big(\kappa^4\ell^4 -6\kappa^2\ell^2 +(\kappa^2\ell^2+1)^2 \cosh 2\kappa(\pi-\ell_3)+1\big) -(\rho-\tau)^2 \ge 0,
\end{equation}
where
\begin{align*}
& \rho:= (\kappa^2\ell^2+1)^2 \big(\sinh\kappa(2\pi-\ell_1) +2\sinh\kappa\ell_1 \cosh 2\kappa(\pi-\ell_3)\big), \\
& \tau:= (\kappa^2\ell^2-3)^2 \sinh\kappa(\ell_1+2\pi) +8(\kappa^2\ell^2-1) \sinh\kappa\ell_1.
\end{align*}
Note that in this case the analogue $a^{2}+b^{2}$ in (\ref{eq,sin,al,th}) is nonzero for any $\kappa>0$, and consequently, there are no flat bands in the negative part of the spectrum.

Extending slightly the argument from \cite{BET20} we see that in the symmetric case, $\ell_2=\ell_3=\pi$, the interval $(-\ell^{-2},0)$ does not belong to the spectrum. Indeed, condition \eqref{Neg,SC,ell_1,3} then becomes
$$ 
\cos\theta= \cosh\kappa\pi \cosh\kappa\ell_1 +\sinh\kappa\pi \sinh\kappa\ell_1 \:\frac{\kappa^4\ell^4-2\kappa^2\ell^2+5}{4-4\kappa^2\ell^2},
$$ 
and since the last fraction is positive for $\kappa<\ell^{-1}$, the right-hand side is then larger than one. This no longer true generally, but we can still claim that
\begin{itemize}
\item the negative spectrum remains separated from zero.
\end{itemize}
To see that, let us inspect the behavior of the left-hand side of \eqref{band,Neg,Gen} as $\kappa\to 0+$ taking its Taylor expansion to the fourth order,
$$ 
-128\pi(\ell_1+2\pi) \big((2\pi-\ell_3)\ell_3 +2\pi\ell_1\big)\kappa ^4 +\mathcal{O}(\kappa ^6),
$$ 
in which the leading term is negative for small $\kappa$ and any fixed choice of the paramaters, uniformly across the Brillouin zone.

The next question concerns the number of negative bands. In the particular case considered in \cite{BET20} we found an answer through a direct investigation of the spectral condition. One can approach, however, the problem from a more general point of view. In order to state the result, let us return to the vertex condition \eqref{genbc}. If a vertex is of degree $n$, the matrix $U$ characterizing the coupling has $n$ eigenvalues, with their multiplicity taken into account. We divide them into three groups, $n^{(0)}$ real eigenvalues (necessarily equal to $\pm 1$ corresponding to the Neumann and Dirichlet part of the coupling in the sense of Theorem 1.4.4 in \cite{BK13}, respectively), and $n^{(\pm)}$ non-real `Robin' eigenvalues situated in the upper and lower complex plane, respectively.
\begin{theorem} \label{thm:negspect}
Consider a periodic quantum graph and assume that its elementary cell contains $N$ vertices with the couplings described by unitary matrices $U_i,\: i=1\,\dots,N$, then the negative spectrum of the corresponding Hamiltonian consists of at most $\sum_{i=1}^N n_i^{(+)}$ bands.
\end{theorem}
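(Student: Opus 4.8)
The plan is to pass to the Floquet--Bloch decomposition \eqref{Floquet1} and reduce the assertion to a uniform bound on the number of negative eigenvalues of the fibre operators $H(\theta)$, which I would then obtain from a min--max argument based on the quadratic form attached to the vertex couplings.

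First I would recall that each $H(\theta)$ is self-adjoint, bounded below, and has compact resolvent, so its spectrum is a discrete sequence $E_1(\theta)\le E_2(\theta)\le\cdots$ of eigenvalues, each branch $\theta\mapsto E_j(\theta)$ being continuous on the connected Brillouin zone $\CC^*$. Since $\sigma(H)=\bigcup_{\theta\in\CC^*}\sigma(H(\theta))$, only finitely many branches can take negative values, each of them sweeps out a compact interval $E_j(\CC^*)$, and the negative part of $\sigma(H)$ equals the union of those intervals intersected with $(-\infty,0)$. Hence, if I can show that $H(\theta)$ has at most $M:=\sum_{i=1}^{N}n_i^{(+)}$ eigenvalues below zero \emph{for every} $\theta$, then only the branches $E_1,\dots,E_M$ ever go negative, $\sigma(H)\cap(-\infty,0)$ is a union of at most $M$ intervals, and therefore has at most $M$ connected components -- which is exactly the claimed bound on the number of negative bands. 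A flat band, should one occur at negative energy, is one of these (constant) branches and counts as a single band, so it does no harm.

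It then remains to count the negative eigenvalues of a single fibre, and here I would use the form representation of the coupling (Theorem~1.4.4 in \cite{BK13} together with the discussion preceding the present theorem): writing $\{\phi_{ij}\}_j$ for an orthonormal eigenbasis of $U_i$, the form of $H(\theta)$ is
\[
h_\theta[\psi]=\sum_{e\subset\CC}\big\|\psi_e'\big\|_{L^2(e)}^2+\sum_{i=1}^{N}\sum_{j}c_{ij}\,\big|\langle\psi(v_i),\phi_{ij}\rangle\big|^2,
\]
on the domain of $\theta$-twisted $H^1$ functions on $\CC$ obeying the Dirichlet constraint $\langle\psi(v_i),\phi_{ij}\rangle=0$ for every $\phi_{ij}$ lying in the $(-1)$-eigenspace of $U_i$, the real coefficients $c_{ij}$ being attached to the non-real (`Robin') eigenvalues of $U_i$ and negative for precisely $\sum_i n_i^{(+)}$ of the pairs $(i,j)$. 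I would then introduce the closed subspace $W_\theta$ of the form domain obtained by imposing, in addition, $\langle\psi(v_i),\phi_{ij}\rangle=0$ whenever $c_{ij}<0$; this is at most $M$ further scalar conditions on the finite-dimensional boundary data, so $\mathrm{codim}\,W_\theta\le M$, while on $W_\theta$ every negative term has been removed, leaving $h_\theta[\psi]\ge\sum_{e}\|\psi_e'\|_{L^2(e)}^2\ge0$. By the min--max principle a semibounded self-adjoint operator whose form is non-negative on a subspace of codimension $\le M$ has at most $M$ eigenvalues below zero (otherwise the span of $M+1$ eigenvectors to negative eigenvalues would intersect $W_\theta$ non-trivially), so $H(\theta)$ has at most $M$ negative eigenvalues, uniformly in $\theta$, which completes the argument.

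The computations involved are light; the one place that needs care is the sign bookkeeping in the form representation -- making sure that the coefficients $c_{ij}$ which come out negative are exactly the $\sum_i n_i^{(+)}$ ones labelled by the eigenvalues of $U_i$ in the upper half-plane, under the sign convention fixed by \eqref{coupling} -- together with the elementary but easily glossed-over passage from `uniform bound on the number of negative eigenvalues of the fibres' to `number of negative bands', which rests on the continuity of the band functions over the connected Brillouin zone.
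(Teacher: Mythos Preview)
Your proof is correct and follows a genuinely different route from the paper's. Both begin with the same Floquet reduction to a uniform bound on the number of negative eigenvalues of the fibre operators $H(\theta)$, but diverge at that point. The paper constructs a comparison operator $\tilde H(\theta)$ by moving the upper-half-plane eigenvalues of each $U_i$ to the real axis or the lower half-plane, observes (via the quadratic form, invoked only as a black box through Theorem~1.4.11 of \cite{BK13}) that $\tilde H(\theta)$ is then positive, and appeals to self-adjoint extension theory: $H(\theta)$ and $\tilde H(\theta)$ share a common symmetric restriction with deficiency indices $(N_+,N_+)$, so a Krein-type result (Corollary~1 to Theorem~8.19 in \cite{We80}) bounds the number of eigenvalues of $H(\theta)$ in the gap $(-\infty,0)$ of $\tilde H(\theta)$ by $N_+$. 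You instead work with the explicit quadratic form throughout: you isolate the $N_+$ negative Robin coefficients, annihilate them by passing to a subspace of the form domain of codimension at most $N_+$ on which the form is manifestly non-negative, and finish by min--max. Your approach is more elementary, replacing the extension-theory machinery by a direct variational argument; the paper's is slightly more abstract and would transfer unchanged to situations where one prefers not to write the form out explicitly. The sign bookkeeping you flag---that under the convention \eqref{genbc} the negative Robin coefficients correspond precisely to the eigenvalues of $U_i$ in the open upper half-plane---is indeed the one point requiring care, and it does come out as you claim.
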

\begin{proof}
Since the dependence of the fiber operators $H(\theta)$ in \eqref{Floquet1} on the quasimomentum is continuous, the number of negative bands coincides with the number of dispersion curves in the negative part of the axis, i.e. with the maximum number of negative eigenvalues that $H(\theta)$ may have. It is easy to see that a star graph Hamiltonian with the vertex of degree $n$ characterized by a matrix $U$ has exactly $n^{(+)}$ negative eigenvalues, their multiplicity taken into account; if $n^{(+)}=0$, the corresponding operator is positive. What is important is that such a positivity connected with a vertex coupling is a local property, as seen from the corresponding quadratic form given in Theorem 1.4.11 of \cite{BK13}.

Consider now the fiber $H(\theta)$ corresponding to our periodic graph with a fixed $\theta\in[-\pi,\pi)$. The total number of the eigenvalues of $U_i$ situated in the open lower halfplane is $N_+:= \sum_{i=1}^N n_i^{(+)}$. Let $\tilde H = \int_{-\pi}^\pi \tilde H(\theta)\, \mathrm{d}\theta$ refer to a quantum graph in which each vertex matrix $U_i$ is replaced by $\tilde U_i$ such that its eigenvalues with non-negative imaginary parts are preserved and those with negative ones are replaced with eigenvalues which are either real or situated in the upper halfplane. It follows from this construction that the self-adjoint operators $H(\theta)$ and $\tilde H(\theta)$ have a common symmetric restriction with the deficiency indices $(N_+,N_+)$. From the general theory of self-adjoint extensions it follows that the spectrum of $H(\theta)$ in any spectral gap of $\tilde H(\theta)$ consists of at most of isolated eigenvalues of the total multiplicity $N_+$, cf. Corollary~1 to Theorem~8.19 in \cite{We80}. However, $\tilde H(\theta)$ is positive by construction, hence $H(\theta)$ cannot have more than $N_+$ negative eigenvalues, their multiplicity take into account.
\end{proof}

This general result confirms what we know for particular cases; recall the examples of square and honeycomb lattices \cite{ET18} having one and two negative bands, respectively. In our present situation we have two vertices of degree three in the elementary cell; the corresponding matrix $U$ in each of them has eigenvalues $1$ and $\e^{\pm 2\pi i/3}$ so that $N_-=2$. It should be stressed that it may happen that the two bands merge into one. In the symmetric case considered in \cite{BET20} we has shown that this happens for $\ell=1$ and $\ell_1=\pi$. The same is true for any $\ell$; to see that it is sufficient to note that for $\ell_1=\ell_3=\pi$ the left-hand side of \eqref{band,Neg,Gen}  becomes
$$
-8 \sinh^4 \pi\kappa (\kappa^2\ell^2-3)^2 \big(-\kappa^4\ell^4 -10\kappa^2\ell^2 +\cosh 2\pi\kappa (\kappa^2\ell^2-3)^2+7\big)
$$
vanishing at $\kappa= \sqrt{3}\ell^{-1}$ and being positive in the vicinity of this point. For other values of the parameters the two bands may merge at a different value of $\kappa$ but an overall picture remains similar as illustrated on Fig.~\ref{Figure12}.
\begin{figure}[h]
\centering
\includegraphics[scale=.75]{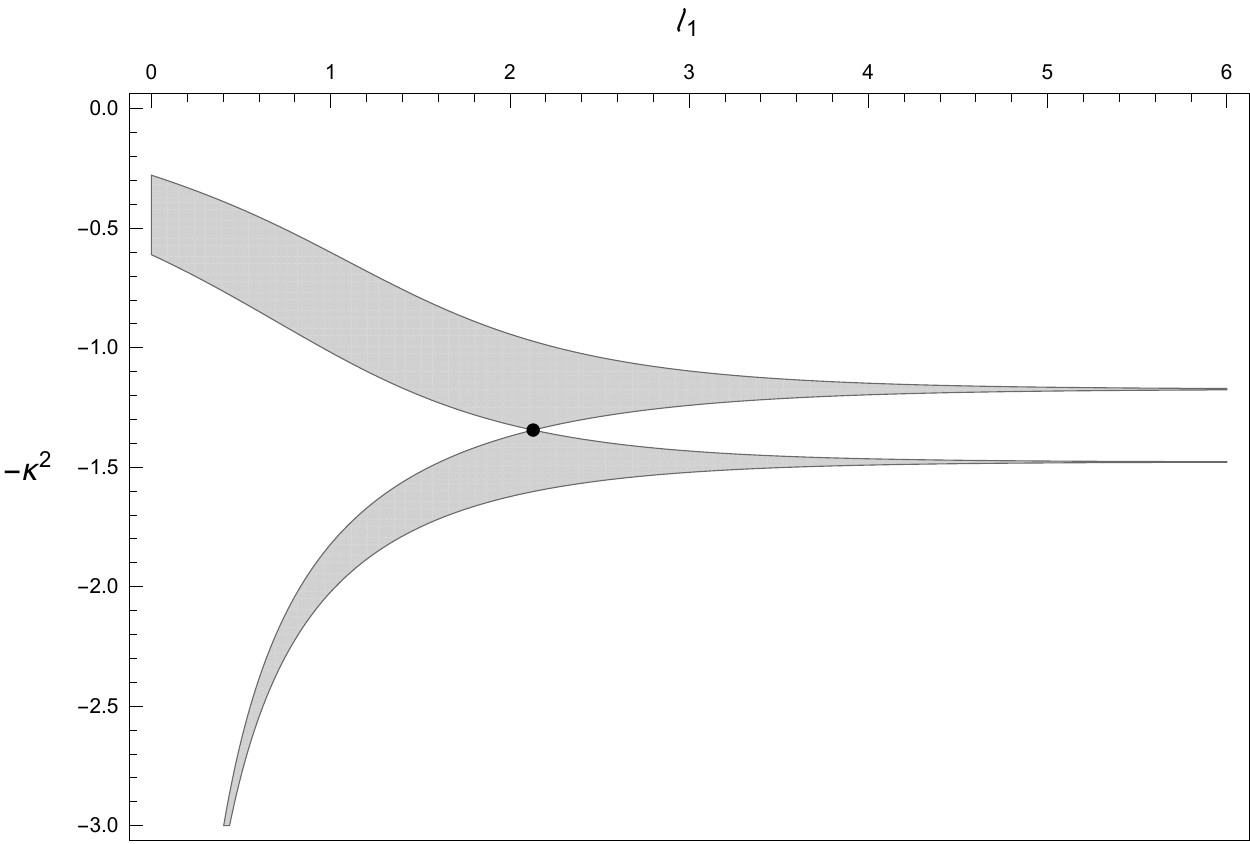}
\caption{The negative spectrum for $\ell=\frac{3}{2}$ and $\ell_3=\frac{2}{3}\pi$ as a function of $\ell_1>0$, the dot indicates the band crossing point.}
\label{Figure12}
\end{figure}
The band crossing point is given by a relation analogous to \eqref{cross,con}, however, in the general case we are unable to find its solution in a closed form.

\subsection{Long connecting links}

While the ring circumference is fixed by assumption, one may ask what happens if the value of $\ell_{1}$ is large. It is clear that the positive spectrum will become denser being determined primarily by bands around the points $\frac{\pi n}{\ell_1}$. Nevertheless, the chain will be generically non-conducting since by \eqref{prob,l>0} the probability that an energy lies in the positive spectrum is zero for any $\ell_1$.

The negative spectrum, on the other hand, consists of two bands (which may merge at some point) for any $\ell_1$; one can ask how they behave when $\ell_1\to\infty$. Since we have $\sinh\kappa\ell_1 \approx \cosh\kappa\ell_1 \approx \frac12\,\e^{\kappa\ell_1}$ for a fixed $\kappa>0$ in this limit, we can write the spectral condition \eqref{Neg,SC,ell_1,3} as
$$ 
f(\ell_3,\ell;\kappa)\,\e^{\kappa\ell_1} + g(\ell_3,\ell;\kappa,\theta) + \OO(\e^{-\kappa\ell_1})=0
$$ 
with
$$ 
g(\ell_3,\ell;\kappa,\theta):= 32\,\sinh\pi\kappa \big(\cos\theta (\kappa^2\ell^2-1) \cosh\kappa(\pi-\ell_3) +2\kappa\ell\sin\theta \sinh\kappa(\pi -\ell_3)\big).
$$ 
This shows that in the limit the bands shrink, exponentially fast, to the points determined by the condition $f(\ell_3,\ell;\kappa)=0$, or explicitly
\begin{equation}\label{NegLargeL1}
e^{2\pi\kappa}\,(\kappa^2\ell^2-3)^2 +8(\kappa^2\ell^2-1)+(\kappa^2\ell^2+1)^2 \big(e^{-2\pi\kappa} -2\cosh2\kappa(\pi-\ell_3)\big)=0.
\end{equation}
Those are situated at both sides of $\kappa =\frac{\sqrt{3}}{\ell}$; one checks easily that
$$ 
f(\ell_3,\ell;\frac{\sqrt{3}}{\ell}) = 16 \big(1+\e^{-\frac{2\sqrt{3}\pi}{\ell}} -2\cosh\frac{2\sqrt{3} (\pi-\ell_3)}{\ell} \big)<0
$$ 
in view of the inequalities $\cosh x\geq1$ and $\e^{-\frac{1}{x}}\leq1$ for $x>0$, and on the other hand, we have $f(\ell_3,\ell;\kappa) =16\kappa\pi +\big(20\pi^2 -4(\pi-\ell_3)^2\big)\kappa^2 +\mathcal{O}(\kappa^{3})>0$ for small $\kappa$ and $\lim_{\kappa\to\infty} f(\ell_3,\ell;\kappa)=+\infty$; according to the general result obtained above the equation \eqref{NegLargeL1} cannot have more than two roots.

In the limit $\ell_1\to\infty$ the bands shrink to points. In the symmetric case, $\ell_3=\pi$, assuming that the scale parameter $\ell$ is not too large we can substitute $\kappa^{2}=\frac{3}{\ell^{2}}+\varepsilon$ into \eqref{NegLargeL1} and obtain
$$ 
\varepsilon \approx \pm 4 \ell^{-2} \big(\e^{\frac{\sqrt{3} \pi }{\ell }}\mp1 \big)^{-1},
$$ 
in particular, $\varepsilon\approx \pm 0.0173$ for $\ell=1$ in accordance with the result of \cite{BET20}. For $\ell_3\ne\pi$ we move away from this approximate symmetry, in the limit $\ell_3\to 0$ the upper limit point approaches $-\kappa^{2}=-1/\ell^{2}$ as the condition \eqref{NegLargeL1} reduces to $-8(\e^{2\pi\kappa}-1) (\kappa^2\ell^2-1)=0$ while the lower one escapes to $-\infty$.

\section{Vertices of degree four: the case of $\ell_{1}=0$}

Let us now pass to the situation where the vertices are of degree four. This may happen if one of the edge lengths, $\ell_1$ or $\ell_3$ shrinks to zero (not both of them, of course). Consider first the former situation. To find the spectral condition, we can take the natural Ans\"atze and match them as we did when deriving the condition \eqref{Pos,SC,ell_1,3}, or equivalently, to take the limit $\ell_{1}\rightarrow0$ in the latter; this yields
\begin{equation}\label{sp,con,ell_1=0}
\sin\pi k \big((k^2\ell^2\!+\!1) \big(\cos\theta \cos k(\pi\!-\!\ell_3)-\cos\pi k\big) +2 k\ell \sin\theta \sin k(\pi\!-\!\ell_3)\big)=0.
\end{equation}

\subsection{The positive spectrum}

The easiest thing to conclude from \eqref{sp,con,ell_1=0} is that $k^{2}$ belongs to the spectrum for any natural number $k=n\in\mathbb{N}$; it may or may not be embedded as we will see below. As for the continuous component, in the symmetric case, $\ell_3=\pi$, the condition (\ref{sp,con,ell_1=0}) reduces to $\cos\theta =\cos k\pi$, so that the spectrum covers the entire interval $[0,\infty)$. In the absence of this symmetry, the spectrum has a band-gap structure determined by vanishing of the bracket in \eqref{sp,con,ell_1=0}. Rewriting this condition in the form \eqref{eq,abc,gen}, the $a$, $b$, and $c$ are as follows
\begin{align*}
a & = (k^2\ell^2+1) \cos k(\pi-\ell_3), \\
b & = 2k\ell \sin k(\pi-\ell_3), \\
c & = (k^2\ell^2+1) \cos k\pi,
\end{align*}
where $a^{2}+b^{2}$ is obviously nonzero. Consequently, it follows from \eqref{band,abc,gen,pos} that $k^{2}$ belongs to a spectral band if and only if
\begin{equation}\label{gap,pos,ell_1=0}
4k^2\ell^2 -(k^2\ell^2+1)^2 \cos2\pi k +(k^2\ell^2-1)^2 \cos2k(\pi-\ell_3) \ge 0.
\end{equation}
In the high energy regime, in particular, the band condition reduces to the inequality $\cos 2k(\pi-\ell_3) \geq \cos 2k\pi$ which can be equivalently rewritten as
\begin{equation} \label{tightband1}
\sin k\ell_2 \sin k\ell_3 \ge 0
\end{equation}
keeping in mind that $\ell_2+\ell_3=2\pi$, and gaps refer to the values $k^2$ for which these expressions become negative, both modulo an $\OO(k^{-2})$ relative error.

It is clear from \eqref{tightband1} that the spectrum has an infinite number of open gaps whenever the chain is asymmetric, $\ell_3\ne\pi$. A detailed shape of the spectrum depends on the ratio $\frac{\ell_2}{\ell_3}$. It is clear that the band pattern is periodic at the momentum scale if this number is rational and irregular in the opposite case. It has, however, a universal property:
\begin{itemize}
\item The probability \eqref{prob,l>0} of belonging to the (positive) spectrum is
\begin{equation}\label{tightprob,l>0}
P_{\sigma}(H)= \left\{ \begin{array}{lcl} 1 & \quad\dots\quad & \ell_3=\pi \\[.3em] \textstyle{\frac12} & \quad\dots\quad & \ell_3\ne\pi \end{array} \right.
\end{equation}
\end{itemize}
Indeed, the probability that each of the factors on the left-hand side in \eqref{tightband1} is for a randomly chosen value of $k$ positive (or negative) is $\frac12$. In the symmetric case these effects are correlated, in the asymmetric they are not, so the probability is $(\frac12)^2 + (\frac12)^2 = \frac12$, and the band and gaps sizes thus grow at the energy scale in the average at the same rate. This means that due to its geometry an asymmetric chain is `less conductive'. Note the system is so simple that, in contrast to \cite{BB13}, the ergodicity is not needed; the result holds irrespective of whether $\ell_3/\pi$ is irrational.

On the other hand, a comparison of \eqref{prob,l>0} and \eqref{tightprob,l>0} shows again the effect observed in the particular situation in \cite{BET20}: for a chain of `loosely coupled' rings we have $P_{\sigma}(H)=0$ whatever the length $\ell_1$ of the connecting link may be, while in the limit $\ell_1\to 0$ the probability is positive (irrespective of the junction position). In other words, the spectrum converges in the vanishing edge limit in accordance with the result of \cite{BLS19} but the convergence is rather nonuniform.

\subsection{Gap closing and embedded eigenvalues}
\label{cross,sec,ell_1=0}

The existence of infinitely many gaps in the asymmetric case does not mean, of course, that some of them may not close for particular values of the parameters. Specifically, we have the following claims:

\begin{itemize}
  \item If the ring arcs are commensurate, $\frac{\ell_{2}}{\ell_{3}}\in \mathbb{Q}$, there are infinitely many points where the gaps close coinciding with some of the flat bands mentioned above; the latter can in this way be embedded in the continuous spectrum. These crossings occur at $(\ell_3,k)=\big(\frac{m\pi}{n}, nj \big)$ with $m,n,j\in\mathbb{N}$. To show that, we denote the expression in the large bracket at the left-hand side of (\ref{sp,con,ell_1=0}) by $\mathcal{H}(k,\ell_3,\theta)$. Then, in analogy with \eqref{cross,con}, crossings occur if
\begin{equation}\label{cross,ell1=0}
\frac{\partial \mathcal{H}(k,\ell_3,\theta)}{\partial \ell_3}=\frac{\partial \mathcal{H}(k,\ell_3,\theta)}{\partial \theta}=\frac{\partial \mathcal{H}(k,\ell_3,\theta)}{\partial k}=0.
\end{equation}
Evaluating the derivatives at $k=jn$ and $\ell_3= \frac{m\pi}{n}$, we get
\begin{align*}
& \frac{\partial \mathcal{H}(k,\ell_3,\theta)}{\partial\ell_3}\biggr|_{\substack{k=nj\\ \ell_3=m\pi/n}} = -2\sin\theta\; j^2n^2 \ell\, \cos \pi j (n-m), \\
& \frac{\partial \mathcal{H}(k,\ell_3,\theta)}{\partial \theta}\biggr|_{\substack{k=nj\\ \ell_3=m\pi/n}} = -\sin\theta (j^2n^2 \ell^2+1) \cos \pi  j(n-m), \\
& \frac{\partial \mathcal{H}(k,\ell_3,\theta)}{\partial k}\biggr|_{\substack{k=nj\\ \ell_3=m\pi/n}} =  2j\ell \cos\pi jn\, \big(\cos\pi jm\, \big(\pi(n-m)\sin\theta +n\ell\cos\theta\big) -n\ell \big).
\end{align*}
The first two expressions vanish for $\theta=0,\pm\pi$; substituting these values into in the third one we arrive at the equations
$$ 
-4jn\ell^2\, \sin^2\left(\frac{\pi jm}{2}\right) \cos\pi jn=0
$$ 
and
$$ 
-2jn\ell^2 \,(\cos\pi jm +1\big) \cos\pi jn=0,
$$ 
respectively; the former is satisfied for even values of $jm$, the latter for odd ones. This proves the claim and shows at the same time that the crossings occur at the edges of the Brillouin zone or in its center, depending on the parity of $jm$.
\item There can be no gap at the momentum value $k=\ell^{-1}$. To see that, we have to solve the equations \eqref{cross,ell1=0} at $(\ell^{-1},\ell_3,\theta)$. The derivatives are easily calculated, vanishing of the first two requires $\sin\frac{\theta\ell+\ell_3-\pi} {\ell }=0$ and third one gives
$$ 
\left(\pi -\ell _3\right) \sin \frac{\theta  \ell +\ell _3-\pi }{\ell } +\ell \cos  \frac{\theta  \ell +\ell _3-\pi }{\ell } +\pi  \sin  \frac{\pi }{\ell } -\ell  \cos  \frac{\pi }{\ell } =0.
$$ 
    The first term at the left-hand side vanishes and we get the equation
$$ 
\pm\ell\left(1\mp\cos\frac{\pi}{\ell}\right) +\pi\sin\frac{\pi}{\ell} = 0
$$ 
    which is satisfied for $\ell^{-1}\in\mathbb{N}$, the upper sign for even $\ell^{-1}$ and the lower sign for the odd one. Needless to say, this is the only missing gap in case that $\ell_2$ and $\ell_3$ are incommensurate; the band-and-gap pattern in dependence on $\ell_3$ is illustrated in Fig.~\ref{Figure13}.
\end{itemize}
\begin{figure}[!htb]
\centering
\includegraphics[scale=.7]{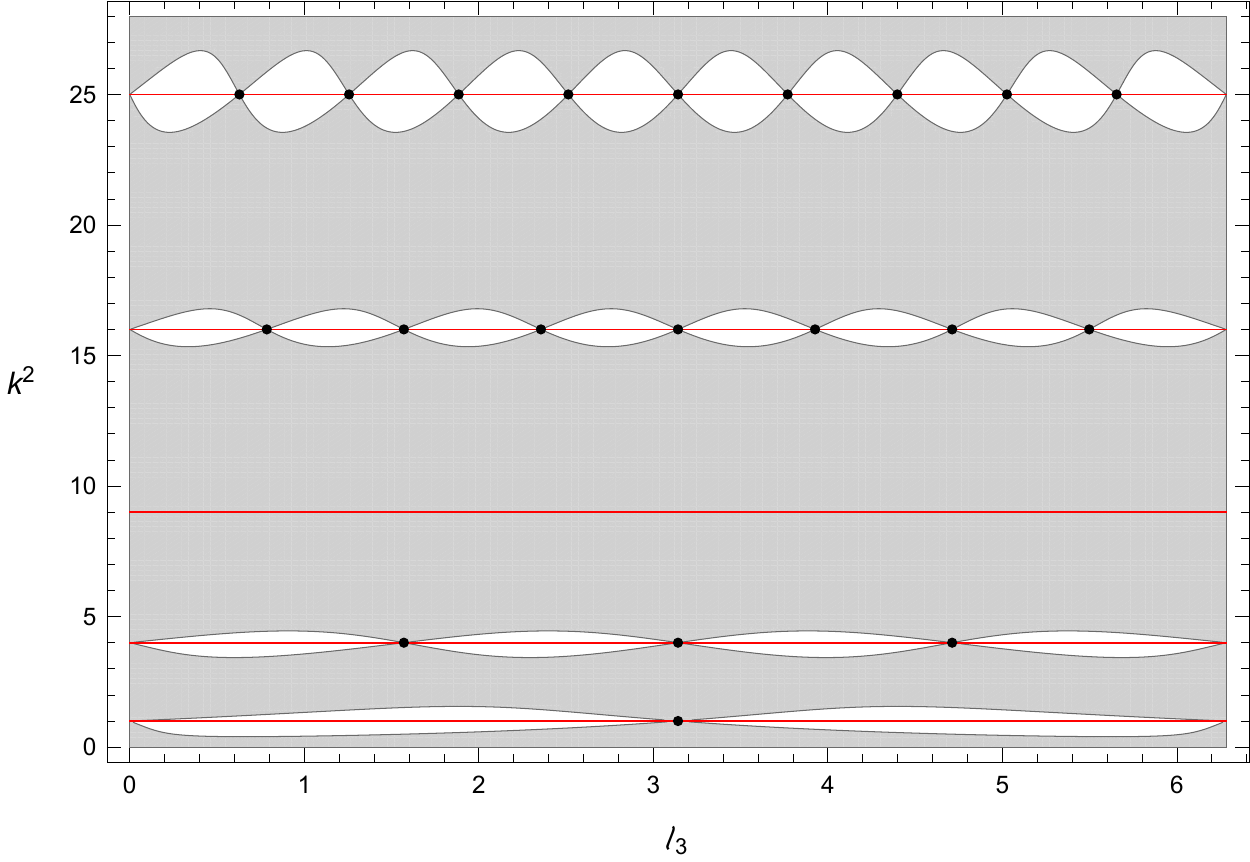}
\caption{The positive spectrum for $\ell=\frac13$.}
\label{Figure13}
\end{figure}

\subsection{The negative spectrum}

To find the negative spectrum one has to substitute $k=i\kappa$ into \eqref{sp,con,ell_1=0} and \eqref{gap,pos,ell_1=0}, in particular, it corresponds to the values $\kappa$ for which
$$ 
H(\kappa):= -(\kappa^2\ell^2-1)^2 \cosh 2\kappa\pi +(\kappa^2\ell^2+1)^2 \cosh 2\kappa (\pi-\ell_3) -4\kappa^2\ell^2
$$ 
is non-negative. The elementary cell now contains a single vertex of degree four and the eigenvalues of the corresponding matrix $U$ are $\pm 1$ and $\pm i$, hence by Theorem~\ref{thm:negspect} there is one negative band. It always contains the point $-\ell^{-2}$ because the inequality $(\kappa^2\ell^2+1)^2 \geq 4\kappa^2\ell^2$ implies $H(\ell^{-1})\ge 0$. In the symmetric case, $\ell_3=\pi$, we have $H(\kappa)= -2(\kappa^2\ell^2-1)^2 \sinh^2\kappa\pi$, so the negative band is flat as we already know from \cite{BET20}, otherwise $H(\ell^{-1})>0$ and the negative spectrum is absolutely continuous.

Furthermore, the band is symmetric with respect to the exchange of $\ell_3$ to $2\pi-\ell_3$ because $\cosh$ is an even function. Also, the negative spectrum remains separated from zero, since $H(\kappa)=-2\ell_3(2\pi-\ell_3)\kappa^{2}+\mathcal{O}(\kappa^{4})$ is negative for small values of $\kappa$. Large values of $\kappa$ also cannot give rise to spectral points because $\lim_{\kappa\to\infty} \,H(\kappa)=-\infty$ holds for any $\ell_3\in(0,2\pi)$.
\begin{figure}[!htb]
\centering
\includegraphics[scale=.7]{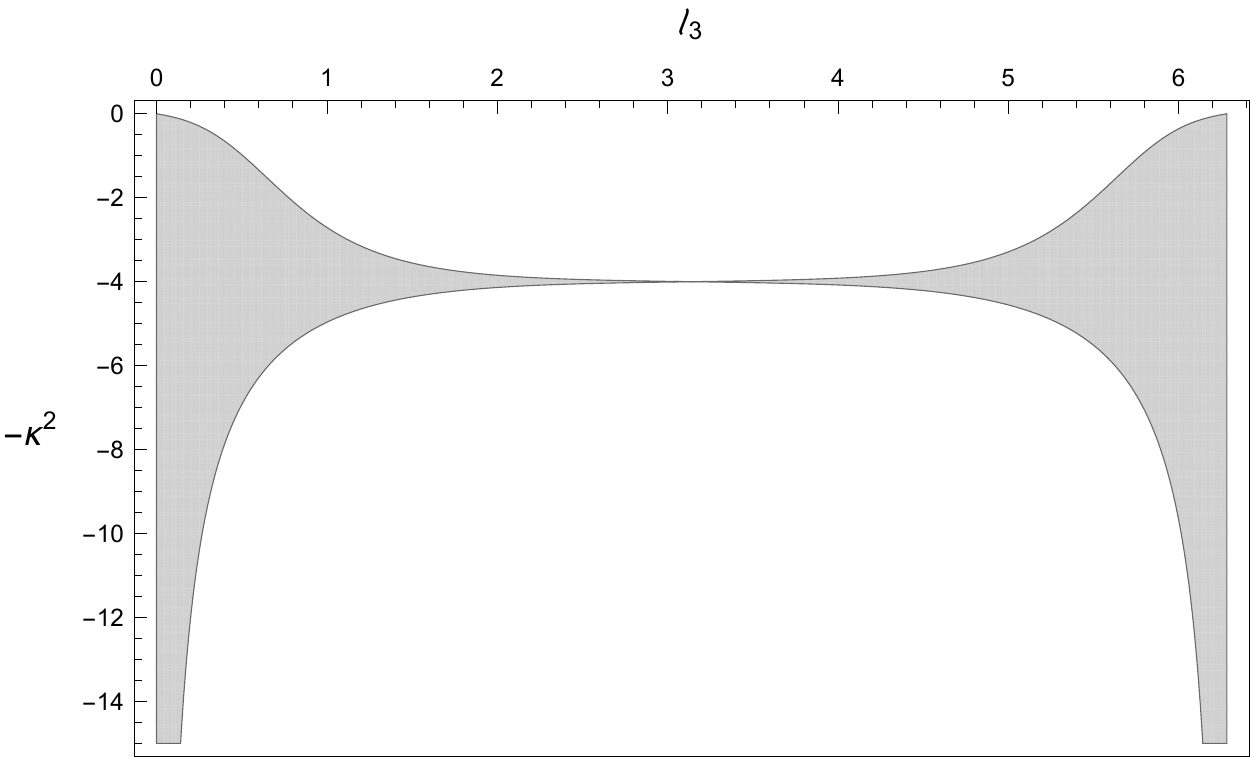}
\caption{The band dependence on $\ell_3$ for $\ell=\frac{1}{2}$.}
\label{Figure14}
\end{figure}
At the same time the band widens as illustrated in Fig.~\ref{Figure14} and its lower edges escapes to $-\infty$. This is no paradox, however, as the lengths cannot vanish simultaneously because is such a case the present model loses meaning.

\section{Vertices of degree four: the case of $\ell_{3}=0$}

\subsection{The positive spectrum}

To get the spectral condition, it is again sufficient to perform the limit $\ell_3\to 0$ in equation \eqref{Pos,SC,ell_1,3}, which yields
\begin{equation}\label{Pos,SC,ell_3=0}
\sin\pi k\, \big((k^2\ell^2+1) \big(\cos\theta \cos k\pi -\cos k(\ell_1+\pi)) +2k\ell \sin\theta \sin k\pi \big)=0.
\end{equation}
It is obvious that $k^2$ for any $k\in\N$ belongs to the spectrum independently of $\theta$. These flat bands are always embedded in the continuous spectrum; note that the large bracket at the left-hand side of \eqref{Pos,SC,ell_3=0} can be annulated by choosing $\cos\theta =\cos n\ell_1$. To determine the band structure beyond the flat case, we restate the requirement in the form \eqref{eq,abc,gen} with the coefficients
\begin{align*}
a & = (k^2\ell^2+1) \cos k\pi, \\
b & = 2k\ell \sin k\pi, \\
c & = (k^2\ell^2+1) \cos k(\ell_1+\pi),
\end{align*}
satisfying $a^{2}+b^{2}\neq0$. After some simplifications, we conclude that $k^{2}$ belongs to a band if and only if
\begin{equation}\label{Gap,Pos,ell_3=0}
4k^2\ell^2 +(k^2\ell^2-1)^2 \cos 2\pi k -(k^2\ell^2+1)^2 \cos 2k(\ell_1+\pi) \geq 0.
\end{equation}
The band pattern is illustrated in Fig.~\ref{Figure15}.
\begin{figure}[!htb]
\centering
\includegraphics[scale=.7]{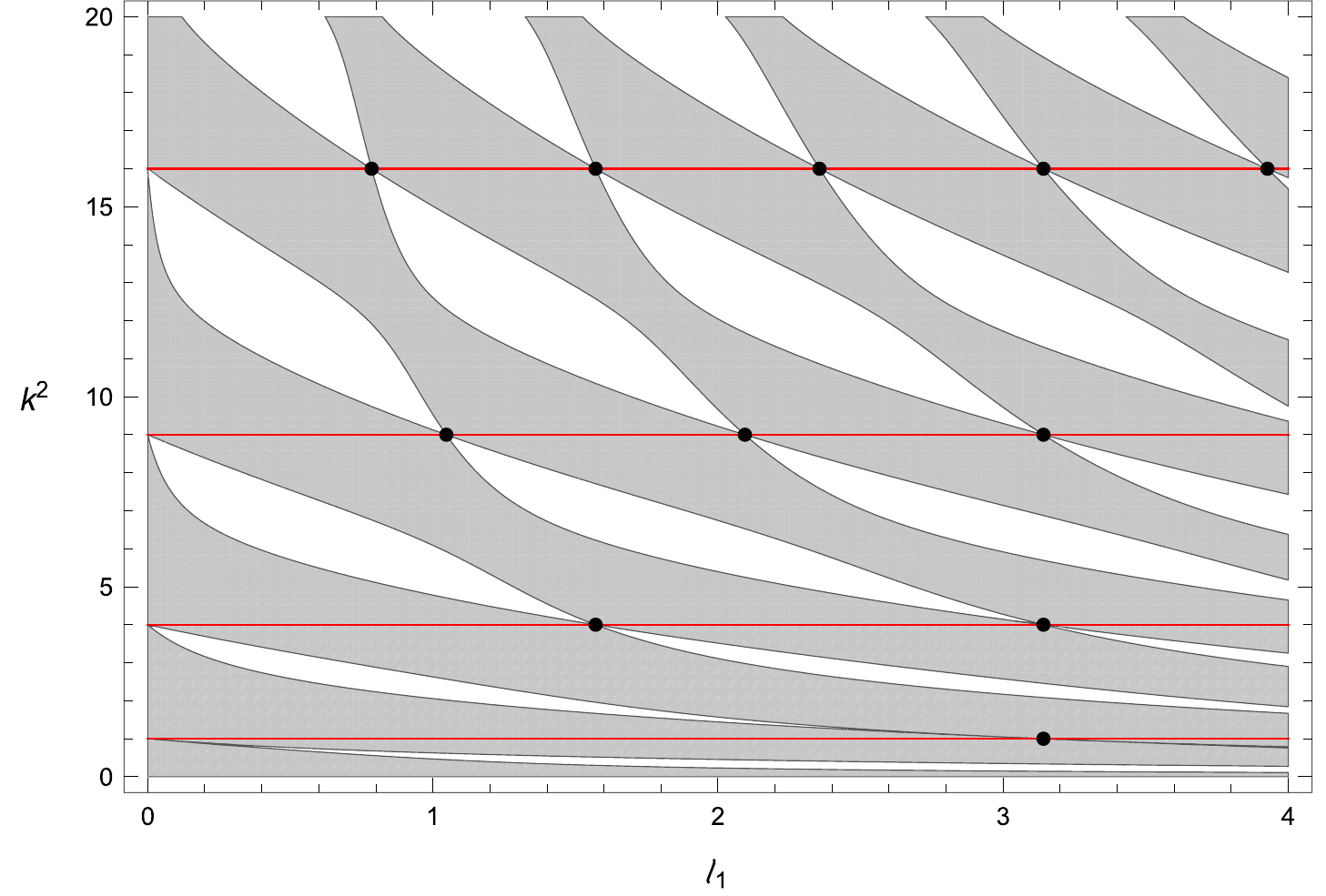}
\caption{The positive spectrum for $\ell=1$. The red lines mark embedded flat bands, the dots the points where the gaps close.}
\label{Figure15}
\end{figure}
In particular, in the high-energy regime, $k\rightarrow\infty$, the number $k^2$ belongs to a band if and only if $\cos 2\pi k \geq \cos 2k(\ell_1+\pi)$, or equivalently
$$ 
\sin k\ell_1 \sin k(\ell_1+2\pi) \ge 0,
$$ 
and to a gap if these quantities are negative, both again modulo an $\OO(k^{-2})$ relative error. The bands and gaps grow in the average at the energy scale; at the momentum scale they are asymptotically periodic if $\ell_1$ is a rational multiple of $\pi$, and aperiodic for $\ell_1$ incommensurate with the ring perimeter. In contrast to the situation discussed in the previous section, there is no `fully conducting' chain here:
\begin{itemize}
\item The probability of belonging to the spectrum is for any $\ell_1>0$ equal to
$$ 
P_{\sigma}(H)= \textstyle{\frac12}\,.
$$ 
\end{itemize}
In the asymptotic regime of large $\ell_1$ the spectrum becomes `dense' in the sense that the number of bands in a fixed interval increases as $\ell_1\to\infty$ but the probability to be in a band remains the same. In contrast to the previous section again, it makes no sense to speak of the limit $\ell_1\to 0$ here.

\subsection{Gap closing}

Asking again under which condition may some gaps close we find that the answer depends on the commensurability of $\ell_1$ with the ring circumference:
\begin{itemize}
\item For $\ell_{1}=\frac{m\pi}{n}$ with coprime $m,n\in\mathbb{N}$, band crossings occur at $k= nj,\: j\in\N$. The argument is similar to that used in Sec.~\ref{cross,sec,ell_1=0}. Denoting the large bracket in \ref{Pos,SC,ell_3=0} by $\mathcal{G}(k,\ell_1,\theta)$, we can write
    the condition for such a crossing as
$$ 
\frac{\partial \mathcal{G}(k,\ell_1,\theta)}{\partial \ell_1}=\frac{\partial \mathcal{G}(k,\ell_1,\theta)}{\partial \theta}=\frac{\partial \mathcal{G}(k,\ell_1,\theta)}{\partial k}=0.
$$ 
Computing the derivatives at $k=nj$, we find that the first two vanish if $\sin(jn(\ell_1+\pi))=0$ and $\sin\theta\,\cos\pi jn=0$, respectively, which is true for for $\ell_{1}=\frac{m\pi}{n}$ and $\theta=0,\pm\pi$. Combining this finding with the remaining condition, we arrive at the equations
$$
2jn\ell^2 (1-\cos\pi jm) \cos\pi jn =0 \quad\text{for}\;\;\; \theta=0
$$
and
$$
-2jn\ell^2 (\cos\pi jm +1) \cos\pi jn =0 \quad\text{for}\;\;\; \theta=\pm\pi.
$$
which are are satisfied for even and odd values of $jm$, respectively.
\item In particular, crossings occur at all the natural values of $k$ if $\ell_{1}=m\pi$.
\item On the other hand, all the gaps remain open if $\ell_{1}$ is not a multiple of $\pi$.
\end{itemize}

\subsection{The negative spectrum}

As in the preceding section, it follows from the general properties of quantum graphs that there is a single negative band. To find it, we have to replace $k$ with $i\kappa$ in \eqref{Pos,SC,ell_3=0} and \eqref{Gap,Pos,ell_3=0}; a number $-\kappa^2$ belongs to the spectrum if
$$ 
G(\kappa):=-4\kappa^2\ell^2 +(\kappa^2\ell^2+1)^2 \cosh 2\pi\kappa -(\kappa^2\ell^2-1)^2 \cosh 2\kappa(\ell_1+\pi)
$$ 
is non-negative. In particular, $-\ell^{-2}$ belongs always to the band which is in this case never flat because $G(\ell^{-1})>0$.

Let us next look how the negative band behaves when $\ell_{1}$ is large. Dividing the spectral condition by $\cosh \kappa (\ell _1+\pi)$, we get
$$ 
(\kappa^2\ell^2-1) \left(\frac{\cos\theta \cosh\pi\kappa}{\cosh\kappa(\ell_1+\pi)}-1\right) +\frac{2 \kappa\ell\,\sin\theta\,\sinh\pi\kappa} {\cosh\kappa(\ell_1+\pi)} = 0
$$ 
which for large $\ell_{1}$ takes the form
$$ 
1-\kappa ^2\ell^2 + 2\big(\cos\theta \cosh\pi\kappa\, (\kappa^2\ell^2-1) +2\kappa \ell\,\sin\theta\, \sinh\pi\kappa\big)\,\mathcal{O}(\e^{-\kappa  \ell_1})=0
$$ 
showing that the band shrinks to the value $-\ell^{-2}$ as $\ell_1\to\infty$. Moreover, putting $\kappa=\ell^{-1}+\delta$ we can solve the resulting equations for $\delta$ arriving thus at the asymptotic expression
$$ 
-\kappa^{2}(\theta) = -\ell^{-2} -4\ell^{-2}\,\e^{-\frac{\pi+\ell_1}{\ell}}\, \sin\theta \sinh\frac{\pi}{\ell} + \mathcal{O}\big(\e^{-\frac{2\ell_1}{\ell}}\big)
$$ 
giving the band width
$$ 
\Delta E = 8\ell^{-2}\,\e^{-\frac{\pi+\ell_1}{\ell}}\, \sinh\frac{\pi}{\ell} + \mathcal{O}\big(\e^{-\frac{2\ell_1}{\ell}}\big).
$$ 
\begin{figure}[!htb]
\centering
\includegraphics[scale=.7]{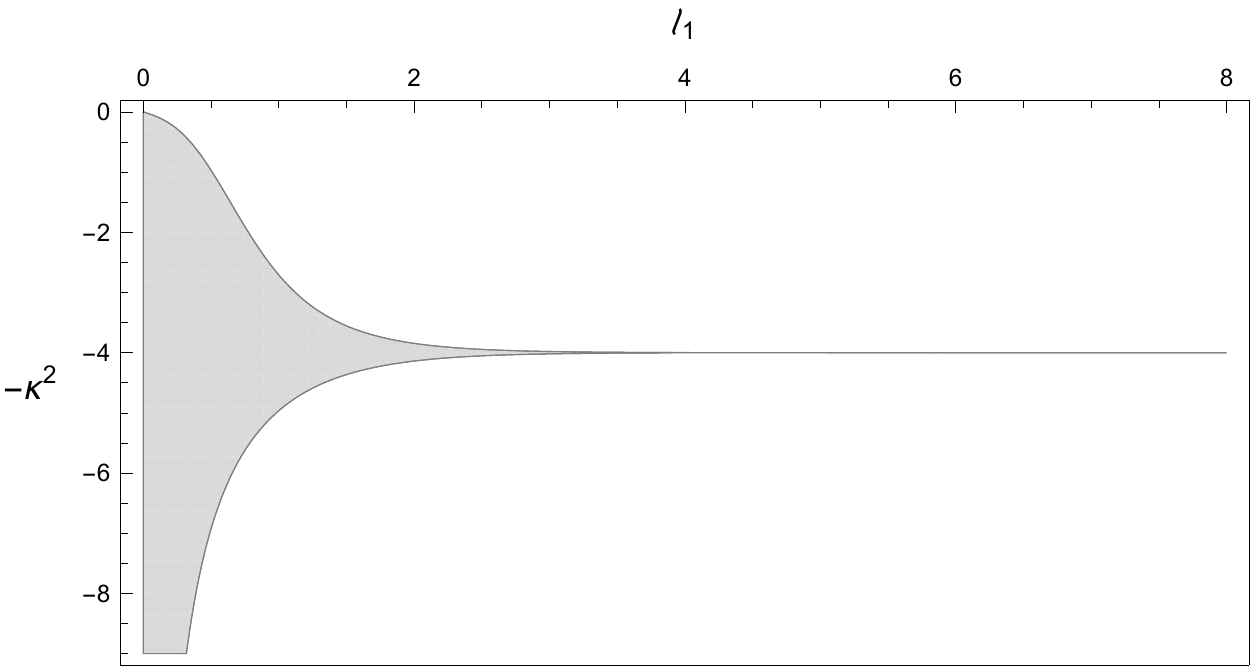}
\caption{The band dependence on $\ell_1$ for $\ell=\frac{1}{2}$.}
\label{Figure16}
\end{figure}
The negative spectrum is shown in Fig.~\ref{Figure16}. It again remains separated from zero since $G(\kappa)=-2\ell_1(2\pi+\ell_1) \kappa^{2}+\mathcal{O}(\kappa^{4})$ is negative for small values of $\kappa$. On the other hand, the band becomes wider as $\ell_1$ decreases but, as is the situation discussed in the previous section, the limit $\ell_1\to 0$ makes no sense.

\bigskip

\subsection*{Acknowledgements}
The research was supported by the Czech Science Foundation project 21-07129S and by the EU project CZ.02.1.01/0.0/0.0/16\textunderscore 019/0000778.

\bigskip

\end{document}